\documentclass[twocolumn, aps, superscriptaddress]{revtex4}


\usepackage{amsthm}
\usepackage{amsmath}
\usepackage{amssymb}
\usepackage{braket}
\usepackage{bbold}
\usepackage{graphicx}
\usepackage{subfigure}
\usepackage{hyperref}
\usepackage{appendix}


\newtheorem{theorem}{Theorem}
\newtheorem{lemma}{Lemma}
\newtheorem{corollary}{Corollary}

\theoremstyle{definition}

\hypersetup{colorlinks=true, citecolor=blue, urlcolor=blue, linkcolor=blue}

\frenchspacing
\parskip=3pt
\begin{document}
\title{A new entanglement measure based on the total concurrence}

\author{Dong-Ping Xuan}
\affiliation{School of Mathematical Sciences, Capital Normal University, Beijing 100048, China}
\author{Zhong-Xi Shen}
\affiliation{School of Mathematical Sciences, Capital Normal University, Beijing 100048, China}
\author{Wen Zhou}
\affiliation{School of Mathematical Sciences, Capital Normal University, Beijing 100048, China}
\author{Zhi-Xi Wang}
\email{wangzhx@cnu.edu.cn}
\affiliation{School of Mathematical Sciences, Capital Normal University, Beijing 100048, China}
\author{Shao-Ming Fei}
\email{feishm@cnu.edu.cn}
\affiliation{School of Mathematical Sciences, Capital Normal University, Beijing 100048, China}

\begin{abstract}
Quantum entanglement is a crucial resource in quantum information processing, advancing quantum technologies. The greater the uncertainty in subsystems' pure states, the stronger the quantum entanglement between them.
From the dual form of $q$-concurrence ($q\geq 2$) we introduce the total concurrence. A  bona fide measure of quantum entanglement is  introduced, the $\mathcal{C}^{t}_q$-concurrence ($q \geq 2$), which is based on the total concurrence. Analytical lower bounds for the $\mathcal{C}^{t}_q$-concurrence are derived. In addition,  an analytical expression is derived  for the $\mathcal{C}^{t}_q$-concurrence in the cases of isotropic and Werner states. Furthermore, the monogamy relations that the $\mathcal{C}^{t}_q$-concurrence satisfies for qubit systems  are  examined. Additionally, based on the parameterized $\alpha$-concurrence and its complementary dual,  the $\mathcal{C}^{t}_\alpha$-concurrence $(0\leq\alpha\leq\frac{1}{2})$ is  also proposed.
\medskip
\end{abstract}

\maketitle

\section{Introduction}


Quantum entanglement is a quintessential manifestation of quantum mechanics, which reveals the fundamental insights into the nature of quantum correlations \cite{Nielsen2000book,Horodecki865}. The quantification of quantum entanglement is a key aspect of entanglement measures from the perspective of resource theory of entanglement \cite{Vedral2275}. Various entanglement measures have been proposed for bipartite systems, including the concurrence introduced by Hill and Wootters based on linear entropy \cite{Hill5022}, negativity \cite{Zyczkowski883}, entanglement of formation (EoF)\cite{Bennett3824}, R\'{e}nyi-$\alpha$ entropy entanglement \cite{HORODECKI377,Gour012108}, Tsallis-$q$ entropy entanglement \cite{LANDSBERG211,Kim062328}, robustness entanglement \cite{Simon052327}, Unified-$(q,s)$ entropy entanglement \cite{Kim295303}, and so on.

Recently, there has been significant research on parameterized concurrence-based entanglement measures.
In \cite{Yang052423}, X. Yang et al. introduced a parameterized entanglement monotone called the $q$-concurrence ($q \geq 2$), which is closely associated with the general Tsallis-$q$ entropy.
Subsequently, Wei et al. \cite{WEI210} improved the analytic lower bound of this measure. Motivated by the Tsallis-$q$ entropy entanglement and the parameterized $q$-concurrence, Wei and Fei \cite{WEI275303} proposed a new parameterized entanglement measure, the $\alpha$-concurrence, which is defined for arbitrary $\alpha \in [0, 1/2]$. These entanglement measures are both defined solely by the reduced state of one subsystem. This limitation prompts a fundamental question about what additional insights can be gained from a given state beyond the information contained in the subsystems.
To address this issue, we introduce a novel parameterized concurrence in this paper, namely the $\mathcal{C}^{t}_q$-concurrence ($q \geq 2$).

This paper is structured as follows.
In Sec.~\ref{sect2}, we begin by defining the total concurrence through the $q$-concurrence and its complementary dual. Building on this, we introduce a new parameterized entanglement measure, the $\mathcal{C}^{t}_q$-concurrence ($q \geq 2$), which is derived from the total concurrence. In Sec.~\ref{sect3}, we present rigorous analytical lower bounds for the parameterized $\mathcal{C}^{t}_q$-concurrence ($q \geq 2$), ensuring their tightness and applicability in various quantum systems.
In Sec.~\ref{sect4}, we provide explicit analytical expressions for the $\mathcal{C}^{t}_q$-concurrence ($q \geq 2$) in the context of both isotropic and Werner states.
And we investigate the monogamy properties of the $C^{t}$-concurrence in multipartite qubit systems and higher-dimensional quantum states in Sec.~\ref{sect5}.
Finally, in Sec.~\ref{sect6}, we conclude by summarizing the key results and highlighting the significant findings of our study.

\section{Entanglement measure derived from the total concurrence of parameterized $q$-concurrence and its complementary dual}\label{sect2}


Let $\mathcal{H}_A$ and $\mathcal{H}_B$ denote $d$-dimensional Hilbert spaces. The concurrence of a bipartite pure state $|\varphi\rangle_{A B} \in \mathcal{H}_A \otimes \mathcal{H}_B$ is defined as follows, as described in \cite{Rungta042315}:
$$
C\left(|\varphi\rangle_{A B}\right) = \sqrt{2\left(1 - \operatorname{tr} \varrho_A^2\right)},
$$
here, $\varrho_A = \operatorname{tr}_B\left(|\varphi\rangle_{A B} \langle \varphi|\right)$ represents the reduced density matrix of subsystem $A$, obtained by tracing out the degrees of freedom of subsystem $B$.
For a mixed state $\varrho_{AB}$, the concurrence is defined through its convex roof extension,
\begin{eqnarray}
C\left(\varrho_{A B}\right)=\inf _{\left\{p_i,\left|\varphi_i\right\rangle\right\}} \sum_i p_i C\left(\left|\varphi_i\right\rangle_{A B}\right)\label{concur},
\end{eqnarray}
where the infimum is taken over all decompositions of $\varrho_{A B} = \sum_i p_i \left|\varphi_i\right\rangle_{A B} \langle \varphi_i|$, with $p_i \geq 0$ and $\sum_i p_i = 1$.

In general, a well-defined quantum entanglement measure $\mathcal{E}$  should satisfy the following fundamental conditions: \cite{Horodecki3,Guhne1}:
\begin{itemize}
\item[(E1)] $\mathcal{E}(\rho) \geq 0$ for any state $\rho \in \mathcal{D}$, where the equality holds only for separable states.
\item[(E2)] $\mathcal{E}$ is invariant under local unitary transformations, $\mathcal{E}(\rho)=\mathcal{E}$ $\left(V_A \otimes V_B \rho V_A^{\dagger} \otimes V_B^{\dagger}\right)$ for any local unitaries $V_A$ and $V_B$.
\item[(E3)]$\mathcal{E}$ does not increase on average under  local operation and classical communication (LOCC),
$$
\mathcal{E}(\rho) \geq \sum_i p_i \mathcal{E}\left(\rho_i\right),
$$
here, $p_i=\operatorname{tr} A_i \rho A_i^{\dagger}$  represents the probability of obtaining the outcome $i$, while $\rho_i=A_i \rho A_i^{\dagger} / p_i$ is the corresponding post-measurement state. The operators $A_i$ are the Kraus operators associated with the  LOCC  process, satisfying the completeness relation $\sum_i A_i^{\dagger} A_i=I$.
\item[(E4)] $\mathcal{E}$ is convex,
$$
\mathcal{E}\left(\sum_i p_i \rho_i\right) \leq \sum_i p_i \mathcal{E}\left(\rho_i\right).
$$
\item[(E5)]  $\mathcal{E}$ cannot increase under $\operatorname{LOCC}, \mathcal{E}(\rho) \geq \mathcal{E}(\Lambda(\rho))$ for any LOCC map $\Lambda$.
\end{itemize}
The condition (E2) becomes unnecessary if the condition (E5) is satisfied. An entanglement measure $\mathcal{E}$ is classified as an entanglement monotone \cite{Vidal355} if it satisfies the first four conditions.

To characterize quantum entanglement, in \cite{Yang052423}, X. Yang et al. proposed the $q$-concurrence ($q \geq 2$) as a parameterized measure of bipartite entanglement, defined for any pure state $|\phi\rangle_{A B}$ as
\begin{eqnarray}\label{cq}
C_q\left(|\phi\rangle_{A B}\right)=1-\operatorname{tr}\left(\rho_A^q\right),
\end{eqnarray}
here $\rho_A = \operatorname{tr}_B(|\phi\rangle_{AB} \langle\phi|)$.

Note that $\mathrm{tr}(\rho_A^q)$ plays a role of $q$-purity and is a real-valued, nonnegative function for all $q$. Indeed, taking the spectral decomposition $\rho_A = {\sum_j}\, {\lambda_j}|j\rangle \langle{j}|$ in terms of the basis of eigenstates ${\{|j\rangle\}_{j = 1,\ldots, d}}$, with $0 \leq {\lambda_j} \leq 1$ and ${\sum_j}\, {\lambda_j} = 1$, it is easily verified that $\mathrm{tr}(\rho_A^q) = {\sum_j}\, {\lambda_j^{q}} \geq 0$. This measure reflects the non-purity of the reduced state and generalizes the linear entropy $ C_2 =1-\operatorname{tr} \rho^2$ ($q = 2$) \cite{Santos024101} to a one-parameter family. However, $C_q$ alone only captures the purity bias associated with $\lambda_j$, but ignores the dual quantity $(1-\lambda_j)$.

To enrich this characterization, with respect to the $q$-concurrence (\ref{cq}), we introduce its complementary dual, $\mathrm{tr}(\textmd{I}-\rho_A)-\mathrm{tr}(I-\rho_A)^q$. We define the total concurrence $C^{t}_q$ of the $q$-concurrence to be the summation of both the $q$-concurrence and its complementary dual, namely, for a quantum pure state $|\phi\rangle_{A B}$,
\begin{eqnarray}
C^{t}_q(|\phi\rangle_{A B})=1-\mathrm{tr}\rho_A^q+\mathrm{tr}(\textmd{I}-\rho_A)-  \mathrm{tr}(\textmd{I}-\rho_A)^q.\label{q1}
\end{eqnarray}
This construction symmetrically combines the original $q$-concurrence with its complementary dual, where the dual term accounts for the non-purity of the complementary operator $\textmd{I} - \rho_A$. Intuitively, this duality reflects an observer's inability to access both $\rho_A$ and its complement dual, thereby providing a more complete description of entanglement.

To illustrate the theoretical advantage of $C_q^t$, consider the difference
$$
C_q^t(|\psi\rangle_{AB}) - C_q(|\psi\rangle_{AB}) = \mathrm{tr}(\textmd{I} - \rho_A) - \mathrm{tr}(\textmd{I} - \rho_A)^q,
$$
which is always non-negative for $q \geq 2$, due to the convexity of the function $x \mapsto x^q$ on $[0,1]$ and the spectra of $\rho_A$ lying in $[0,1]$.
This term quantifies the ability of $C_q^t$ to exceed the additional entanglement captured by the original $C_q$.

Thus, the total concurrence $C_q^t$ provides a finer-grained quantization of entanglement by preserving the structure of $C_q$ and adding the complement dual. This makes it particularly useful in distinguishing entanglement in applications such as entanglement detection and resource quantization.

To illustrate the total concurrence  $C^{t}_q(\rho)$, we examine the specific case of $q = 2$.
Consider a discrete random variable $Y$ defined over a sample space $\{y_1, y_2,\cdots,y_n\}$ with an associated probability distribution given by $p_i=\textmd{Pr}(Y=y_i)$. The self-information corresponding to each event $y_i$  is expressed as $\textmd{I}(Y=y_i)=f(p_i)$, where $f(p_i)=p_i-1$. The average information content of the random variable $Y$ corresponds to the $2$-concurrence $C_2 $ , which is defined as follows:
$$
C_2({\textbf{p}})=-\sum^n_{i=1}p_if(p_i),
$$
which quantifies the uncertainty before the statistical experiment.

Now, for a given event $y_i$, we define a new binomial random variable $Y_i\sim\{p_i,1-p_i\}$, which represents a complementary dual event to $y_i$, accounting for all possible outcomes except $y_i$. The corresponding self-information is given by $-f(1-p_i)$. The complementary dual of the $2-$concurrence $ C_2 $ is defined as
\begin{eqnarray}
\overline{C_2}({\textbf{p}})=-\sum^n_{i=1}(1-p_i)f(1-p_i).
\label{e2}
\end{eqnarray}

In classical information theory, uncertainty and information are intrinsically linked: the greater the uncertainty in a probability distribution, the more information is obtained upon learning the outcome of an experiment. The $2$-concurrence $ C_2 $ quantifies the information associated with the event that actually occurs, whereas the dual $2$-concurrence $C_2({\textbf{p}})$  (see Eq. (\ref{e2})) characterizes the complementary contribution. Thus, we define the total concurrence as the summation of the $2$-concurrence $ C_2 $ and its complementary dual,
\begin{eqnarray}
C^{t}_2({\textbf{p}})=\sum^n_{i=1}C_2(p_i,1-p_i)=C_2(\textbf{p})+\overline{C_2}(\textbf{p}),
\label{totalentropy}
\end{eqnarray}
where $C_2(p_i,1-p_i)$ are the $2$-concurrence  $ C_2 $ and its complementary dual of a discrete random variable $Y_i$ with a probability distribution $p_i$. This definition ensures that the total concurrence incorporates both the primary event information and its complementary counterpart, providing a more complete measure of uncertainty and information.
Therefore, similar to classical information scenarios, the present total concurrence \( C_q^t \) provides intuitively a way to include the complementary dual information. \( C_q^t \) extends the quantification of entanglement and duality beyond what is captured by the \( C_q \), providing deeper insights into the structure of entanglement.

\noindent\emph{\textbf{Example 1}}. Consider a discrete random variable $Y$ with a probability distribution $p_i=\textmd{Pr}(Y=y_i)$ corresponding to outcomes $Y=y_i$  for $i=1, 2, 3$, where $0\leq p_i\leq1$ and the normalization condition $\sum^3_{i=1}p_i=1$ holds. The $2$-concurrence  $ {C}_2(X) $  is defined as
$
C_2(Y) = -\sum_{i=1}^{3} p_i f(p_i).
$
In contrast, the total concurrence $C^{t}_2(Y)$ incorporates both the primary probability distribution and its complementary dual, given by
$
C_2^{t}(Y) = -\sum_{i=1}^{3} \left[ p_i f(p_i) + (1 - p_i) f(1 - p_i) \right].
$
As illustrated in FIG.~\ref{Fig1}, the total concurrence $C^{t}_2(Y)$ captures a higher degree of uncertainty compared to the $2$-concurrence $ C_2(Y) $, as it accounts for the contributions from both direct and complementary probability distributions.
\begin{figure}
\centering
\includegraphics[width=0.90\linewidth]{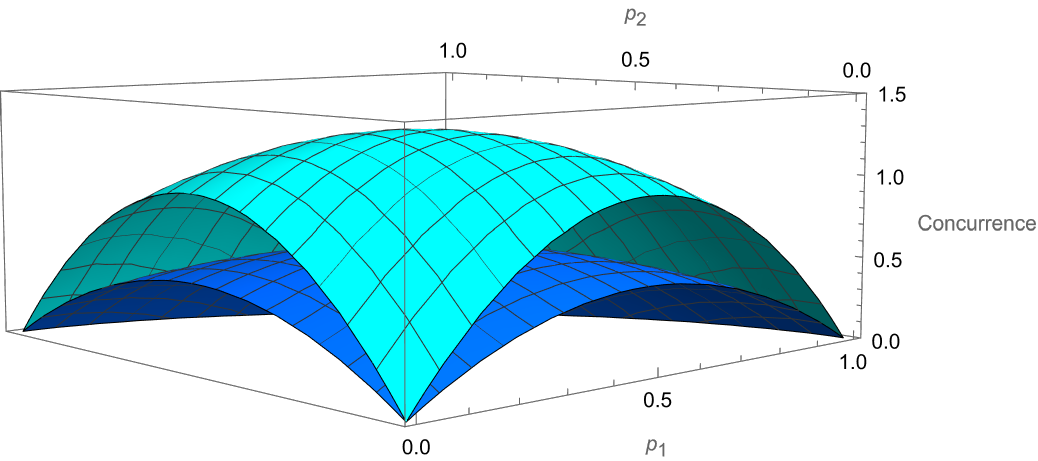}
\caption{A comparison between the $2$-concurrence $ C_2(Y) $ and the total concurrence  $C^{t}_2$  as a function of the probability distribution $\{p_i\}$ is presented. The results indicate that the total concurrence $C^{t}_2(Y)$ (light blue) consistently exceeds the $2$-concurrence $ C_2(Y) $  (blue), highlighting its ability to capture a greater degree of uncertainty by incorporating both the direct and complementary probability distributions.}
\label{Fig1}
\end{figure}

If $|\phi\rangle_{A B}$ has the Schmidt form,
\begin{equation}\label{ac1}
|\phi\rangle_{A B}=\sum_{i=1}^d\sqrt{\lambda_i}|a_ib_i\rangle,
\end{equation}
here $\sum_{i=1}^d\lambda_i=1$ and $\lambda_i\geq 0$, we have
\begin{equation}
C^{t}_q(|\phi\rangle_{A B})=d-\sum_{i=1}^d\lambda_i^{q}-\sum_{i=1}^d(1-\lambda_i)^{q},
\end{equation}
here $C^{t}_q(|\phi\rangle_{A B})$ satisfies $0 \leq C^{t}_q(|\phi\rangle_{A B}) \leq d-d^{1-q}(1+(d-1)^{q})$. The lower bound is attained if and only if $|\phi\rangle_{A B} $ is a separable state, written as $ |\phi\rangle_{A B} = \left|a_i b_i\right\rangle = \left|a_i\right\rangle \otimes \left|b_i\right\rangle $. The upper bound is reached for maximally entangled pure states, such as $\left|\Phi^{+}\right\rangle = \frac{1}{\sqrt{d}} \sum_{i=1}^d \left|a_i b_i\right\rangle $.

The parameter $q$ in the $q$-concurrence $C_q$ plays a role analogous to that in generalized entropy functions such as Tsallis and R\'{e}nyi entropies. Beyond being a formal mathematical parameter, it controls how the entanglement measure responds to the distribution of eigenvalues of the reduced density matrix $\rho_A$, thereby modulating the sensitivity to the structure of the spectra. $C_q$ quantifies the deviation from purity in a nonlinear fashion, with the nonlinearity determined by $q$. When $q \rightarrow 2$, the measure approaches the linear entropy. As $q$ increases, the measure becomes more reflective of larger eigenvalues in $\rho_A$, and thus more responsive to strongly entangled components, while becoming less sensitive to smaller eigenvalues. Concerning the total concurrence
$C_q^t(|\psi\rangle) = C_q(|\psi\rangle) + \mathrm{tr}(\textmd{I}- \rho_A) - \mathrm{tr}[(\textmd{I} - \rho_A)^q]$, the second term serves as a complementary entropy response based on the spectra of $(\textmd{I} - \rho_A)$. The parameter $q$ in this dual expression allows the measure to symmetrically enhance entanglement sensitivity and reveal the structure not captured by $C_q$ alone.

Before demonstrating that the total concurrence defined in (\ref{q1}) qualifies as a bona fide measure of quantum entanglement, we first show the following result.

\begin{lemma}
The function $F^{t}_q(\rho)=1-\mathrm{tr}\rho^q+\mathrm{tr}(\textmd{I}-\rho)-  \mathrm{tr}(I-\rho)^q$ is concave, meaning that for any probability distribution $\left\{p_i\right\}$ and corresponding density matrices $\rho_i$, the following inequality holds
$$\sum^n_{i=1}p_iF^{t}_q(\rho_i)\leq F^{t}_q(\rho),$$
for any $q\geq2$. Equality occurs if and only if all $\rho_i$ are identical for all $p_i>0$.
\label{LemmaA}
\end{lemma}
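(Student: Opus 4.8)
\emph{Proof strategy.} The plan is to express $F^t_q$ as a sum of four terms, observe that the only non-trivial contributions are the two trace-power terms, and reduce their concavity to the convexity of the scalar map $t\mapsto t^q$ for $q\geq 2$. Writing $F^t_q(\rho)=1+\operatorname{tr}(I-\rho)-\operatorname{tr}\rho^q-\operatorname{tr}(I-\rho)^q$, the constant $1$ and the linear term $\operatorname{tr}(I-\rho)$ are affine in $\rho$, hence concave, so everything hinges on showing that $\rho\mapsto\operatorname{tr}\rho^q$ and $\rho\mapsto\operatorname{tr}(I-\rho)^q$ are both convex on the set of density matrices (whose spectra, together with those of $I-\rho$, lie in $[0,1]$ so that $I-\rho$ stays positive semidefinite).

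Second, I would prove the key fact that $X\mapsto\operatorname{tr} f(X)$ is convex on Hermitian matrices whenever $f$ is convex, applied to $f(t)=t^q$ with $q\geq 2$. For a convex combination $C=\lambda A+(1-\lambda)B$ with eigenbasis $\{|i\rangle\}$, I would write $\operatorname{tr} f(C)=\sum_i f(\langle i|C|i\rangle)=\sum_i f(\lambda\langle i|A|i\rangle+(1-\lambda)\langle i|B|i\rangle)$ and apply scalar convexity, followed by the Peierls--Bogoliubov inequality $\sum_i f(\langle i|A|i\rangle)\leq\operatorname{tr} f(A)$ (which itself follows from Jensen applied to the doubly stochastic overlaps $|\langle i|a_k\rangle|^2$). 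This yields $\operatorname{tr} f(C)\leq\lambda\operatorname{tr} f(A)+(1-\lambda)\operatorname{tr} f(B)$. Since $\rho\mapsto I-\rho$ is affine, convexity of $\operatorname{tr}(I-\rho)^q$ follows by composition. Negating, both trace-power terms are concave, so $F^t_q$ is concave, and Jensen's inequality for concave functions gives $\sum_i p_i F^t_q(\rho_i)\leq F^t_q(\rho)$ with $\rho=\sum_i p_i\rho_i$.

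Third, for the equality condition I would exploit that $t\mapsto t^q$ is \emph{strictly} convex for $q\geq 2$. Tracing the two inequalities above, equality in the scalar Jensen step forces $\langle i|A|i\rangle=\langle i|B|i\rangle$ for every $i$, while equality in the Peierls step forces each eigenvector $|i\rangle$ of $C$ to be a common eigenvector of $A$ and $B$; together these imply $A=B$, i.e. $\operatorname{tr} f$ is strictly convex. Applying this to a general distribution $\{p_i\}$ via the strict form of Jensen's inequality, equality $\sum_i p_i F^t_q(\rho_i)=F^t_q(\rho)$ can hold only if all $\rho_i$ with $p_i>0$ coincide; the converse is immediate.

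The main obstacle I anticipate is the strict-convexity/equality analysis rather than the bare concavity: establishing convexity of $\operatorname{tr}\rho^q$ is routine via Peierls--Bogoliubov, but carefully extracting the equality conditions---showing that simultaneous equality in the scalar and Peierls steps forces $A=B$, and then lifting this to the multi-term Jensen inequality---requires the strict convexity of $t\mapsto t^q$ together with the observation that the affine term $\operatorname{tr}(I-\rho)$ contributes no slack and hence places no constraint of its own.
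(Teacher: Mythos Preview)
Your proof is correct and follows essentially the same route as the paper: both reduce the concavity of $F^t_q$ to the fact that a trace functional $\rho\mapsto\operatorname{tr} g(\rho)$ inherits convexity/concavity from the scalar function $g$. The paper packages this by writing $F^t_q(\rho)=\operatorname{tr}[h(\rho)]$ for the single concave scalar function $h(x)=1-x^q-(1-x)^q$, checks $h''<0$, and then cites a reference for the trace-lifting step; you instead split $F^t_q$ into its affine pieces plus the two trace-power terms and supply a self-contained Peierls--Bogoliubov proof that $\rho\mapsto\operatorname{tr}\rho^q$ is convex. These are the same mathematics organized slightly differently. Your treatment of the equality case is considerably more careful than the paper's, which simply asserts it; your strict-convexity argument is correct, and the paper offers nothing sharper there.
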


\emph{Proof.}
For a given quantum state $\rho$, a generalized entropy function can be defined as follows, as introduced in \cite{Canosa170401}
\begin{eqnarray}
S_f(\rho)=\mathrm{tr}[f(\rho)]=\sum_if(\lambda_i),
\label{general0}
\end{eqnarray}
where $\lambda_i$ are the eigenvalues of $\rho$, and $f$  is a smooth concave function satisfying $f''(\lambda) < 0$  for $\lambda \in (0, 1)$. Additionally, $f(\lambda)$  is continuous on $[0, 1]$ and fulfills the boundary conditions $f(0)=f(1)=0$. Consider a function $ h: [0,1] \to \mathbb{R} $ that maps the interval $[0,1]$ to the set of real numbers,
\begin{eqnarray}
h(x)=x-x^q+1-x-(1-x)^q=1-x^q-(1-x)^q.
\label{q2}
\end{eqnarray}
The function $h(x)$ is a smooth and strictly concave real-valued function defined on the interval $x\in [0,1]$, with boundary conditions $h(0)=h(1)=0$. Its second derivative is given by $h''(x) = -x(x-1) \left[(1-x)^{q-2} + x^{q-2} \right] < 0, \quad \text{for any } q \geq 2.$ This confirms that $h(x)$ exhibits concavity over the specified domain.

Therefore, for the entropy form (\ref{general0}) corresponding to $F^{t}_q(\rho)$, we know
$$F^{t}_q(\rho)={\rm tr}[h(\rho)]=\sum_{i}h(\lambda_i).$$


For any concave function $h(x)$, it follows that the functional $\rm{tr}[h(\rho)]$ remains concave for any Hermitian operator $\rho$ (for further details, see Sec. III in Ref. \cite{Renou070403}). The concavity of $ h(x)$ directly leads to the concavity of $F^{t}_q(\rho)$, thereby confirming its concave nature. Consequently, for any convex combination $\rho=\sum^n_ip_i\rho_i$, the inequality $\sum_{i}^{n} p_i F^{t}_q(\rho_i) \leq F^{t}_q(\rho)$ holds. Equality is achieved if and only if all the density matrices \(\rho_i\)  are identical. $\Box$


The Lemma 1 primarily establishes the concavity of the function $F^{t}_q$, which is to demonstrate that the total concurrence satisfies the axiomatic requirements of (E3): entanglement measure does not increase on average under LOCC, within the entanglement quantification framework. Building on the above analysis and leveraging the definition of total concurrence  ${C}^t_q$ in (\ref{q1}), we introduce a new normalized bipartite entanglement measure, termed the $\mathcal{C}^{t}_q$-concurrence measure.

For a pure bipartite state $|\Phi\rangle_{AB}$ in the Hilbert space ${\cal H}_A\otimes {\cal H}_{B}$, we define its $\mathcal{C}^{t}_q$-concurrence as
\begin{eqnarray}
\mathcal{C}^{t}_q(|\Phi\rangle_{AB})=\frac{1}{\mu}{C}^t_q(|\Phi\rangle_{AB})
=\frac{1}{\mu}{F}^t_q(\rho_A),
\label{Cq1}
\end{eqnarray}
here $\rho_A={\rm tr}_B(|\Phi\rangle_{AB}\langle\Phi|)$, and $\mu=d-d^{1-q}(1+(d-1)^{q})$  serves as the normalization factor to ensure the proper scaling of the measure.

For a bipartite mixed state $\rho_{AB}$, the $\mathcal{C}^{t}_q$-concurrence is defined through the  convex-roof extension, which is expressed as
\begin{eqnarray}
\mathcal{C}^{t}_q(\rho_{AB})=\inf_{\{p_i,|\Phi_i\rangle\}}\sum_ip_i\mathcal{C}^{t}_q(|\Phi_i\rangle_{AB}),
\label{Cq2}
\end{eqnarray}
where the infimum is taken over all the possible pure state decompositions of $\rho_{AB}=\sum_ip_i|\Phi_i\rangle_{AB}\langle\Phi_i|$, with probabilities $ p_i \geq 0 $ satisfying the normalization condition $ \sum_i p_i = 1$.

\begin{theorem}
The $\mathcal{C}^{t}_q$-concurrence $(q\geq 2)$ defined in (\ref{Cq2}) constitutes a well-established  parameterized measure of bipartite entanglement.
\label{theoremA}
\end{theorem}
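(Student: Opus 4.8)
The plan is to establish that $\mathcal{C}^{t}_q$ meets the axioms (E1)--(E4), so that it is a bona fide entanglement monotone; since (E2) is implied once the LOCC conditions hold, the substantive work concentrates on the behaviour under local operations. First I would dispatch (E1), (E2) and (E4) by direct inspection of the convex-roof construction (\ref{Cq2}), and then reduce the monotonicity axiom (E3) to the concavity already secured in Lemma~\ref{LemmaA}.

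For (E1), on pure states the normalization $\mathcal{C}^{t}_q=\frac{1}{\mu}F^{t}_q(\rho_A)$ together with the established bound $0\le C^{t}_q\le\mu$ gives $0\le\mathcal{C}^{t}_q(|\Phi\rangle_{AB})\le 1$, with the lower value attained exactly for product states. Passing to the convex roof preserves non-negativity, and $\mathcal{C}^{t}_q(\rho_{AB})=0$ forces an optimal decomposition consisting solely of product pure states, i.e.\ $\rho_{AB}$ separable; conversely every separable state admits such a decomposition. For (E2), $F^{t}_q(\rho_A)=\sum_i h(\lambda_i)$ depends only on the spectrum $\{\lambda_i\}$ of $\rho_A$, which is preserved by $V_A\otimes V_B$, and the infimum over decompositions inherits this invariance. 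Property (E4) is automatic for a convex roof: concatenating the optimal decompositions of the $\rho_i$, weighted by $p_i$, furnishes a decomposition of $\sum_i p_i\rho_i$ whose average cost does not exceed $\sum_i p_i\,\mathcal{C}^{t}_q(\rho_i)$.

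The heart of the proof is (E3). Following the standard reduction of \cite{Vidal355}, it suffices to show that the pure-state measure does not increase, on average, under a local measurement, and then to invoke convexity to absorb classical communication and the mixed-state extension. Consider a measurement on $B$ with Kraus operators $\{I_A\otimes B_k\}$ obeying $\sum_k B_k^{\dagger}B_k=I$, sending $|\Phi\rangle_{AB}$ to $|\Phi_k\rangle$ with probability $p_k$ and reduced state $\rho_A^{(k)}$. Using cyclicity of the partial trace over $B$ one checks the averaging identity $\sum_k p_k\,\rho_A^{(k)}=\rho_A$; concavity of $F^{t}_q$ from Lemma~\ref{LemmaA} then yields $\sum_k p_k F^{t}_q(\rho_A^{(k)})\le F^{t}_q(\rho_A)$, i.e.\ the measure does not increase on average. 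A measurement on $A$ is treated symmetrically once one notes that, for a pure state, $F^{t}_q(\rho_A)=F^{t}_q(\rho_B)$ — a consequence of $h(0)=0$ and $\dim\mathcal{H}_A=\dim\mathcal{H}_B=d$ — so the identical argument runs through $\rho_B$. Combining pure-state monotonicity with (E4) then delivers full LOCC monotonicity for mixed states in the usual way.

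The hard part will be (E3): establishing the averaging identity $\sum_k p_k\,\rho_A^{(k)}=\rho_A$ and confirming that the pure-state function is genuinely concave over the whole operator domain, so that Jensen's inequality applies. This is exactly where Lemma~\ref{LemmaA} is indispensable. The remaining care lies in treating measurements on either subsystem uniformly, via the identity $F^{t}_q(\rho_A)=F^{t}_q(\rho_B)$, and in the routine reduction of an arbitrary LOCC protocol to local measurements followed by communication, after which convexity closes the argument. $\Box$
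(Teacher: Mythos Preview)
Your proposal is correct and follows essentially the same route as the paper: (E1) and (E4) are read off from the convex-roof definition, and (E3) is reduced to the concavity of $F^{t}_q$ in Lemma~\ref{LemmaA} through the averaging identity $\sum_k p_k\rho_A^{(k)}=\rho_A$ for a one-sided measurement. The only notable difference is that the paper appends an independent verification of (E5) via Schur concavity and Nielsen's majorization criterion, whereas you (legitimately) let (E3)+(E4) absorb full LOCC monotonicity and instead make explicit the symmetry $F^{t}_q(\rho_A)=F^{t}_q(\rho_B)$ to handle measurements on either party.
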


\emph{Proof.}
(E1) From the definition in (\ref{q1}) and (\ref{Cq1}), we  get $\mathcal{C}^{t}_q(|\Phi\rangle_{AB})\geq 0$. By the convex-roof extension of $\mathcal{C}^{t}_q(\rho_{AB})$, we have $\mathcal{C}^{t}_q(\rho_{AB})\geq 0$.

Next, we prove that $\mathcal{C}^{t}_q(\sigma_{AB})=0$ if and only if $\sigma_{AB}$ is separable. When a quantum state $|\Phi\rangle_{AB}$ is pure separable, the reduced state $\rho_{A}$ is also pure. Therefore, $\rho_{A}^q = \rho_{A}$, $\mathrm{tr}{\rho_{A}}=\mathrm{tr}\rho_{A}^q=1$ and $\mathrm{tr}(I-\rho_{A})=\mathrm{tr}(I-\rho_{A})^q=d-1$. Thus, $\mathcal{C}^{t}_q(|\Phi\rangle_{AB})=\frac{1}{\mu}{F}^t_q(\rho_A)
=\frac{1}{\mu}(1-\mathrm{tr}\rho_A^q+\mathrm{tr}(\textmd{I}-\rho_A)-  \mathrm{tr}(I-\rho_A)^q)=0$. Since any bipartite separable mixed states can be written in the form of convex combinations of bipartite separable pure states, that is, $\sigma_{AB}=\sum_ip_i|\Psi_i\rangle_{AB}\langle\Psi_i|$, where $|\Psi_i\rangle_{AB}$ are bipartite separable pure states. Thus, be definition $\mathcal{C}^{t}_q(\sigma_{AB})=0$. Conversely, when the quantum state $|\Phi\rangle_{AB}$ is a pure state, one has
$\mathcal{C}^{t}_q(|\Phi\rangle_{AB})=\frac{1}{\mu}{F}^t_q(\rho_A)$.
let $\lambda_i$ $(i=1\cdots d)$ be the corresponding eigenvalue of $\rho_A$. Then ${F}^t_q(\rho_A)=d-\sum_{i=1}^d\lambda_i^{q}-\sum_{i=1}^d(1-\lambda_i)^{q}=0$, namely, $d-\sum_{i=1}^d[\lambda_i^{q}+(1-\lambda_i)^{q}]=0$, where $0\leq \lambda_i\leq1$ and $q\geq 2$, which implies that either $\lambda_i=1$ or $\lambda_i=0$. Hence, $\rho_A=|\Phi\rangle_{A}\langle\Phi|$ is a pure state and $|\Phi\rangle_{AB}=|\Phi\rangle_{A}\otimes|\Phi\rangle_{B}$ is separable state.


If $\rho_{AB}$ is an entangled state, then in any pure state decomposition of $\rho_{AB}$, there exists at least one entangled pure state $|\Phi_i\rangle_{AB}$ such that $\mathcal{C}^{t}_q(|\Phi_i\rangle_{AB})> 0$. Hence, $\mathcal{C}^{t}_q(\rho_{AB})> 0$.

(E3) We first show that $\mathcal{C}^{t}_q$-concurrence is not increasing on average under  LOCC for any pure state $|\Phi\rangle_{A B}$. Let $\left\{M_i\right\}$ be a CPTP map acting on the subsystem $B$. The postmap states are
$$
\rho_i^{A B}=\frac{1}{p_i} M_i\left(|\Phi\rangle_{A B}\langle\Phi|\right)
$$
with probability $p_i=\operatorname{tr}\left[M_i\left(|\Phi\rangle_{A B}\langle\Phi|\right)\right]$. We have
$$
\rho^A=\sum_i p_i \rho_i^A,
$$
where $\rho^A=$ $\operatorname{tr}_B\left(|\Phi\rangle_{A B}\langle\Phi|\right)$ and $\rho_i^A=\operatorname{tr}_B \rho_i^{A B}$.

Let $\left\{q_{i l},\left|\Phi_{i l}\right\rangle_{A B}\right\}$ represent the optimal pure-state decomposition of $\rho_i^{A B}$, satisfying the condition that
$$
\mathcal{C}^{t}_q\left(\rho_i^{A B}\right)=\sum_l q_{i l} \mathcal{C}^{t}_q\left(\left|\Phi_{i l}\right\rangle_{A B}\right),
$$
where $q_{i l} \geq 0, \quad \sum_l q_{i l}=1$ and $\rho_i^{A B}=\sum_l q_{i l}\left|\Phi_{i l}\right\rangle_{A B}\left\langle\Phi_{i l}\right|$. Then, we obtain the following relation
\begin{eqnarray}
\mathcal{C}^{t}_q\left(|\Phi\rangle_{A B}\right)&=&\frac{1}{\mu}{C}^t_q(\rho^A)
=\frac{1}{\mu}{C}^t_q\left(\sum_i p_i \rho_i^A\right)\nonumber\\
&=&\frac{1}{\mu}{C}^t_q\left(\sum_{i l} p_i q_{i l} \rho_{i l}^A\right)
\geq \sum_{i l} p_i q_{i l} \frac{1}{\mu}{C}^t_q\left(\rho_{i l}^A\right)\nonumber\\
&=&\sum_{i l} p_i q_{i l} \mathcal{C}^{t}_q\left(\left|\Phi_{i l}\right\rangle_{A B}\right)
=\sum_i p_i \mathcal{C}^{t}_q\left(\rho_i^{A B}\right),
\label{slocc1}
\end{eqnarray}
where $\rho_{i l}^A=\operatorname{tr}_B\left(\left|\Phi_{i l}\right\rangle_{A B}\left\langle\Phi_{i l}\right|\right)$ and the inequality is from the concavity shown in  Lemma \ref{LemmaA},
this concludes the proof for the case of pure states.

For mixed states \(\rho_{AB}\), we define $\rho_i^{A B}=\frac{1}{p_i} M_i\left(\rho_{A B}\right)$, where $p_i=\operatorname{tr}\left[M_i\left(\rho_{A B}\right))\right]$. Let $\left\{t_j,\left|\upsilon_j\right\rangle_{A B}\right\}$ represent the optimal pure state ensemble of $\rho_{A B}$. For each $j$, we define $\rho_{j i}^{A B}=\frac{1}{t_{j i}} M_i\left(\left|\upsilon_j\right\rangle_{A B}\right)$, which is the final state obtained after the action of $M_i$ on $\left|\upsilon_j\right\rangle_{A B}$, where $t_{ji}=\operatorname{tr}\left[M_i\left(\left|\upsilon_j\right\rangle_{A B}\right)\right]$.
(\ref{slocc1}) implies that
\begin{eqnarray}
\mathcal{C}^{t}_q\left(\left|\upsilon_j\right\rangle_{A B}\right) \geq \sum_i t_{j i} \mathcal{C}^{t}_q\left(\rho_{j i}^{A B}\right).
\label{slocc2}
\end{eqnarray}
By the linearity of $M_i$, we have
\begin{eqnarray}
\rho_i^{A B} & =&\frac{1}{p_i} M_i\left(\rho_{A B}\right)
=\frac{1}{p_i} M_i\left(\sum_j t_j\left|\upsilon_j\right\rangle_{A B}\right) \nonumber\\
& =&\frac{1}{p_i} \sum_j t_j M_i\left(\left|\upsilon_j\right\rangle_{A B}\right)
=\frac{1}{p_i} \sum_j t_j t_{j i} \rho_{j i}^{A B}.
\label{slocc3}
\end{eqnarray}
For every index pair $(j, i)$, let $\left\{t_{j i l},\left|\Phi_{j i l}\right\rangle_{A B}\right\}$  represent the optimal pure-state decomposition of $\rho_{j i}^{A B}$.
Then, from  (\ref{slocc2}) we obtain
$$
\begin{aligned}
\mathcal{C}^{t}_q\left(\rho_{A B}\right) & =\sum_j t_j
\mathcal{C}^{t}_q\left(\left|\upsilon_j\right\rangle_{A B}\right) \\
& \geq \sum_{j i} t_j t_{j i} \mathcal{C}^{t}_q\left(\rho_{j i}^{A B}\right) \\
& =\sum_{j i l} t_j t_{j i} t_{j i l}\mathcal{C}^{t}_q\left(\left|\Phi_{j i l}\right\rangle_{A B}\right) \\
& =\sum_i p_i \mathcal{C}^{t}_q\left(\rho_i^{A B}\right),
\end{aligned}
$$
where in the final equality, we have utilized the result from (\ref{slocc3}),  that $\mathcal{C}^{t}_q\left(\rho_i^{AB}\right) = \sum_{jl} \frac{t_j t_j t_{ji l}}{p_i} \mathcal{C}^{t}_q\left(|\Phi_{ji l}\rangle_{AB}\right)$. Additionally, the mixed state $\rho_i^{AB}$ is expressed as $\rho_i^{AB} = \frac{1}{p_i} \sum_j t_j t_{ji} t_{ji l} |\Phi_{ji l}\rangle_{AB} \langle\Phi_{ji l}|$.

(E4) The convexity (E4) is followed directly from the fact that all the measures constructed via the convex roof extension are convex \cite{Horodecki865}.


(E5) We follow the method outlined in \cite{Mintert207} to demonstrate that our entanglement measure remains non-increasing under LOCC. Let $\vec{\lambda}_\alpha$ and $(\vec{\lambda}_\beta$ represent the Schmidt vectors corresponding to the states $|\alpha\rangle$ and  $(|\beta\rangle)$, respectively, where each vector is ordered in decreasing order of the squared Schmidt coefficients. It has been established that the state  $|\beta\rangle$ can be obtained from the state $|\alpha\rangle$ through  LOCC if and only if the Schmidt vector $\vec{\lambda}_\alpha$  is majorized by $\vec{\lambda}_\beta$ \cite{Nielsen436}, denoted as $\vec{\lambda}_\alpha \prec \vec{\lambda}_\beta$.  Majorization implies that the components $\left[\lambda_\alpha\right]_i$  and $\left[\lambda_\beta\right]_i$ of $\vec{\lambda}_\alpha$  and $(\vec{\lambda}_\beta$, respectively, listed in non-increasing order, satisfy the condition $\sum_{i=1}^j\left[\lambda_\alpha\right]_i \leq \sum_{i=1}^j\left[\lambda_\beta\right]_i$ for $1<j \leq d$.

As entanglement is non-increasing under LOCC, any  entanglement measure $\mathcal{E}$ must satisfy the condition $\mathcal{E}(\alpha) \geq \mathcal{E}(\beta)$ whenever $\vec{\lambda}_\alpha \prec \vec{\lambda}_\beta$. This requirement, referred to as Schur concavity, holds if and only if $\mathcal{E}$, when formulated as a function of the squared Schmidt coefficients $\lambda_i$, remains unchanged under the exchange of any two coefficients.
Additionally, it must satisfy the inequality
$$
\left(\lambda_i-\lambda_j\right)\left(\frac{\partial \mathcal{E}}{\partial \lambda_i}-\frac{\partial \mathcal{E}}{\partial \lambda_j}\right) \leq 0,
$$
for any pair of components $\lambda_i$ and $\lambda_j$ of the Schmidt vector $\vec{\lambda}$.

For any pure state \( |\psi\rangle \) represented in the form given by (\ref{ac1}),
the $\mathcal{C}^{t}_q$-concurrence is clearly invariant under permutations of the Schmidt coefficients. Let us define the function $f(\lambda)=(1-\lambda)^{q-1}-{\lambda}^{q-1}$. By differentiating $f(\lambda)$, we obtain $f'(\lambda)=-(q-1)(1-\lambda)^{q-2}-{\lambda}^{q-2}<0$, this result indicates that $f(\lambda)$ is a strictly decreasing function. Therefore, $f(\lambda)$  is a subtractive function, meaning that it produces a negative contribution as $ \lambda $ increases.

Since the following holds
$\left(\lambda_i-\lambda_j\right)\left(\frac{\partial \mathcal{C}^{t}_q}{\partial \lambda_i}-\frac{\partial \mathcal{C}^{t}_q}{\partial \lambda_j}\right)=$
$q\left(\lambda_i-\lambda_j\right)\left((1-\lambda_i)^{q-1}
-(1-\lambda_j)^{q-1}-{\lambda_i}^{q-1}+{\lambda_j}^{q-1}\right) \leq 0,$
for any two components $\lambda_i$ and $\lambda_j$ of the squared Schmidt coefficients of the state $|\psi\rangle$, we observe the following, if $\lambda_{i}<\lambda_{j}$, then $f(\lambda_i)>f(\lambda_j)$, implying $\left(\lambda_i-\lambda_j\right)\left(\frac{\partial \mathcal{C}^{t}_q}{\partial \lambda_i}-\frac{\partial \mathcal{C}^{t}_q}{\partial \lambda_j}\right)\leq 0$. Conversely, if $\lambda_{i}>\lambda_{j}$, then $f(\lambda_i) < f(\lambda_j)$, yielding $\left(\lambda_i-\lambda_j\right)\left(\frac{\partial \mathcal{C}^{t}_q}{\partial \lambda_i}-\frac{\partial \mathcal{C}^{t}_q}{\partial \lambda_j}\right)\leq 0$.
Therefore, this condition ensures that the $ \mathcal{C}^{t}_q $-concurrence is monotonic under LOCC, meaning
$\mathcal{C}^{t}_q(|\psi\rangle) \geq \mathcal{C}^{t}_q(\Lambda |\psi\rangle),$
for any LOCC operation  $\Lambda $. This proves that $ \mathcal{C}^{t}_q $ is an entanglement measure that does not increase under LOCC.

Next let $\rho=\sum_i p_i\left|\psi_i\right\rangle\left\langle\psi_i\right|$ be the optimal pure state decomposition of $\mathcal{C}^{t}_q(\rho)$ with $\sum_i p_i=1$ and $p_i>0$. We can get
$$
\begin{aligned}
\mathcal{C}^{t}_q(\rho) & =\sum_i p_i \mathcal{C}^{t}_q\left(\left|\psi_i\right\rangle\right) \\
& \geq \sum_i p_i \mathcal{C}^{t}_q\left(\Lambda\left|\psi_i\right\rangle\right) \\
& \geq \mathcal{C}^{t}_q(\Lambda(\rho)),
\end{aligned}
$$
where the last inequality is from the definition (\ref{Cq2}). $\Box$



In \cite{Yang052423}, a parameterized entanglement monotone known as the $q$-concurrence, $C_q\left(\ket{\psi}\right)$, was introduced for any pure state $\ket{\psi}$, defined as $C_q\left(\ket{\psi}\right)=1-\mathrm{tr}\rho_A^q$  for $q\geq 2$. In \cite{WEI275303}, a new parameterized entanglement measure, referred to as the $\alpha$-concurrence, was proposed, which is given by $C_{\alpha}\left(\ket{\psi}_{AB}\right)=\mathrm{tr}\rho_A^{\alpha}-1$ for $0\leq \alpha\leq1/2$.
The $\alpha$-concurrence can be regarded as a complementary counterpart to the $q$-concurrence, with the parameters constrained by $0\leq \alpha \leq \frac{1}{2}$ and $q\geq 2$.
Following a similar reasoning, with respect to our $ \mathcal{C}^{t}_q $-concurrence, we may introduce a complementary dual information measure, termed the $ \alpha $-concurrence. We then propose the $ \mathcal{C}^{t}_\alpha $-concurrence for $ 0 \leq \alpha \leq \frac{1}{2} $, as detailed in Appendix \ref{a}.

\section{establishing bounds for the $\mathcal{C}^{t}_q$-concurrence measure}\label{sect3}
Due to the complexities involved in optimizing entanglement measures, it is typically challenging to derive explicit analytical expressions for such measures, especially in the case of general mixed states.
In this section, we derive analytical lower bounds for the $\mathcal{C}^{t}_q$-concurrence using the Positive Partial Transpose (PPT) criterion and the realignment criterion \cite{Horodecki4206, Rudolph219}.



A general bipartite state can be expressed as $\rho=\sum_{ijkl}\rho_{ij,kl}|ij\rangle\langle kl|$, where the indices $i$ and $k$ ($j$ and $l$) correspond to the row and column indices for subsystems $A$ (and $B$), respectively. According to the PPT  criterion, for a state $\rho$ to be separable, its partial transpose with respect to subsystem $B$ , denoted $\rho^{\Gamma}=\sum_{ijkl}\rho_{ij,kl}|il\rangle\langle kj|$, must be non-negative, i.e., $\rho^{\Gamma}\geq 0$.
On the other hand, the realignment criterion states that if $\rho$ is separable, the realigned matrix $\mathcal{R}\left(\rho\right)=\sum_{ijkl}\rho_{ij,kl}|ik\rangle\langle jl|$, must satisfy $\|\mathcal{R}\left(\rho\right)\|_1\leq 1$, where $\|M\|_1=\mathrm{tr}\sqrt{M^{\dagger}M}$ represents the trace norm of the matrix $M$.

For $\rho=|\Phi\rangle\langle\Phi|$, where $\ket{\Phi}$ is the pure state defined in  (\ref{ac1}), it was shown in \cite{Chen040504} that
\begin{equation}\label{text1}
1\leq\left\|\rho^{\Gamma}\right\|_1
=\left\| \mathcal{R}(\rho)\right\|_1
=\left(\sum_{i=1}^r\sqrt{\lambda_i}\right)^2\leq r.
\end{equation}

Specifically, when $q=2$, the $q$-concurrence is given by $C_2\left(\ket{\Phi}\right)=1-\sum_{i=1}^d\lambda_i^2=2\sum_{i<j}\lambda_i\lambda_j$. In \cite{Chen040504}, an analytical lower bound for $C_2\left(\ket{\Phi}\right)$ was derived
\begin{equation}\label{b2p}
C_2\left(\ket{\Phi}\right)\geq\frac{1}{d\left(d-1\right)}
\left(\left\|\rho^{\Gamma}\right\|_1-1\right)^2.
\end{equation}
This bound holds for any pure state $\ket{\Phi}$ in the Hilbert space $\mathcal{H}_A\otimes\mathcal{H}_B$. Additionally, for $q=2$, the $\mathcal{C}^{t}_q$-concurrence  is expressed as $\mathcal{C}^{t}_2\left(\ket{\Phi}\right)=2-2\sum_{i=1}^d\lambda_i^2
=4\sum_{i<j}\lambda_i\lambda_j=2C_2\left(\ket{\Phi}\right)$.

Before establishing a tight lower bound for the $\mathcal{C}^{t}_q$-concurrence in the regime $q\geq2$, we first derive the following result. Given a pure state $|\Phi\rangle$ as defined in (\ref{ac1}), we examine the monotonicity of the function $f\left(q\right)$ , which is defined as
\begin{align}\label{bound1}
f\left(q\right)=\frac{d-\sum_{l=1}^d\lambda_l^{q}-\sum_{l=1}^d(1-\lambda_l)^{q}}
{d-d^{1-q}(1+(d-1)^{q})},
\end{align}
for any $q\geq2$.
We find the  expression the $\frac{\partial f}{\partial q} $
\begin{equation}\label{text006}
\frac{\partial f}{\partial q}=\frac{d^{q-1}}{\left(-d^q+(d-1)^q+1\right)^2} M_{dq},
\end{equation}
where
\begin{align}\label{bound2}
&M_{dq}=-\left(d^q-(d-1)^q-1\right)\nonumber\\
&\left(\sum _{i=1}^d \ln \left(1-\lambda _l\right) \left(1-\lambda _l\right){}^q
+\sum _{i=1}^d \ln \left(\lambda _l\right) \lambda _l^q\right) \nonumber\\
&-d^{1-q} \left((d-1)^q (\ln d-\ln (d-1))+\ln d\right)  \nonumber\\
&\left(-\sum _{i=1}^d \left(1-\lambda _l\right){}^q-\sum _{i=1}^d \lambda _l^q+d\right).
\end{align}


We observe that the sign of  $\frac{\partial f}{\partial q} $ is determined by the sign of the function $M_{dq}$, subject to the conditions $\sum_{l=1}^d\lambda_l=1$ and  $\lambda_l>0$ for $l=1,...,d$. To investigate the minimum of $M_{dq}$, we apply the method of Lagrange multipliers \cite{Rungta012307} while imposing the constraints $\sum_l\lambda_l=1$ and $\lambda_l>0$. Under these conditions, there exists a single stable point at $\lambda_l=1/d$ for all $l=1,\cdots,d$, where we find that $M_{dq}=0$.


The $\frac{\partial^2}{\partial q} f\left(q\right)$ at the stable point is given by
\begin{widetext}
\begin{align}
\label{bound3}
\frac{\partial^2M_{dq}}{\partial\lambda_i^2}\Big|_{\lambda_i=\frac{1}{d}}=& d^{1-q} \left(\left(-d^q+(d-1)^q+1\right) \left((d-1)^q \ln \left(\frac{d-1}{d}\right)+\ln \left(\frac{1}{d}\right)\right)\right)\nonumber\\+&d \left(\left(\frac{1}{d}\right)^q+\left(\frac{d-1}{d}\right)^q-1\right) \left((d-1)^q (\ln d-\ln (d-1))+\ln d\right),
\end{align}
\end{widetext}
which is non-negative when $q\geq s\equiv3.33902$ for $d=2$ and for all $q\geq2$ when $d\geq3$. In these parameter ranges, the extremum corresponds to a minimum, ensuring that $\partial f/\partial q\geq 0$.
Following this analysis, we arrive at the following conclusion.
\begin{corollary}\label{corollary}
For  any $q\geq h$, the following bound holds
$$
\mathcal{C}^{t}_q(\rho)\geq\frac{d-d^{1-q}(1+(d-1)^{q})}{d-d^{1-h}(1+(d-1)^{h})}
\mathcal{C}^{t}_h(\rho),
$$
where the parameter $h$ satisfies $h\geq s=3.33902$  for  $d=2$ and $h \geq 2$ when $d\geq3$.
\end{corollary}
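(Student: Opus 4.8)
The plan is to reduce the corollary to a single-copy (pure-state) comparison and then lift it to mixed states by the convex-roof construction, using as the sole analytic input the monotonicity of $f(q)$ prepared in (\ref{text006})--(\ref{bound3}). For a pure state $|\Phi\rangle$ with Schmidt coefficients $\{\lambda_l\}$ the quantity $f(q)$ in (\ref{bound1}) is precisely the normalized total concurrence $C^{t}_q(|\Phi\rangle)/\mu_q$ with $\mu_q=d-d^{1-q}(1+(d-1)^q)$. Hence $\partial f/\partial q\geq0$ yields $f(q)\geq f(h)$ for $q\geq h$, which rearranges into the pure-state inequality $C^{t}_q(|\Phi\rangle)\geq(\mu_q/\mu_h)\,C^{t}_h(|\Phi\rangle)$; the stated ratio $\mu_q/\mu_h$ is nothing but the quotient of the two $q$-dependent normalizations, and the corollary is the convex-roof lift of this inequality.

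First I would make the pure-state monotonicity fully rigorous, since this carries the real content. By (\ref{text006}) the sign of $\partial f/\partial q$ equals that of $M_{dq}$, so it suffices to prove $M_{dq}\geq0$ on the simplex $\{\lambda_l\geq0,\ \sum_l\lambda_l=1\}$ for the admissible $q$. The Lagrange-multiplier computation already isolates the unique interior stationary point $\lambda_l=1/d$, at which $M_{dq}=0$, and (\ref{bound3}) shows its second derivative there is non-negative exactly for $q\geq s\approx3.33902$ when $d=2$ and for all $q\geq2$ when $d\geq3$, so the point is a local minimum. The step that still needs care is promoting this \emph{local} minimum to a \emph{global} one over the compact simplex: since the interior stationary point is unique, the global minimum is attained either at that symmetric point (value $0$) or on a boundary face $\lambda_l\to0$, so I would re-minimize $M_{dq}$ on each face by the same Lagrange procedure with fewer free coordinates and verify that no face minimum drops below $0$. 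This yields $M_{dq}\geq0$ throughout, hence $\partial f/\partial q\geq0$ and $f(q)\geq f(h)$.

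With the pure-state inequality available for every $|\Phi\rangle$, the mixed-state bound follows by the standard convex-roof argument. Writing the total concurrence as $C^{t}_q(\rho)=\inf\sum_i p_i\,C^{t}_q(|\Phi_i\rangle)$ and letting $\{p_i,|\Phi_i\rangle\}$ be an optimal decomposition for the parameter $q$, I would apply the pure-state inequality termwise and then use that this same ensemble is only one admissible decomposition for the parameter $h$, obtaining
$$
C^{t}_q(\rho)=\sum_i p_i\,C^{t}_q(|\Phi_i\rangle)\geq\frac{\mu_q}{\mu_h}\sum_i p_i\,C^{t}_h(|\Phi_i\rangle)\geq\frac{\mu_q}{\mu_h}\,C^{t}_h(\rho).
$$
The structural fact that makes the lift work is that the comparison constant $\mu_q/\mu_h$ depends only on $q$, $h$ and $d$, not on the state, so it factors out of the sum and survives the infimum; the one piece of bookkeeping to watch is that this factor must stay attached to the total concurrences $C^{t}$, whose $q$-dependent normalizations $\mu_q$ are what produce it.

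The main obstacle I anticipate is precisely this global-minimum claim for $M_{dq}$: the supplied computation certifies the symmetric point only as a local minimum through the diagonal second derivative (\ref{bound3}), and since the $d$-dependent constants in $M_{dq}$ persist on every boundary face, excluding smaller values there does not reduce cleanly to a lower-dimensional copy of the same problem and must be examined face by face. Determining the precise threshold $s\approx3.33902$ for $d=2$ --- the root at which the qubit second derivative changes sign --- is the other delicate, genuinely numerical ingredient, and it is exactly what forces the strengthened hypothesis $h\geq s$ in the qubit case, whereas $h\geq2$ already suffices when $d\geq3$.
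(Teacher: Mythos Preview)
Your approach is essentially the paper's: the corollary is presented there as an immediate consequence of the monotonicity analysis in (\ref{text006})--(\ref{bound3}), with no separate proof given, and your convex-roof lift to mixed states is the natural completion (the paper only makes that step explicit later, inside the proof of Theorem~\ref{thma}). The local-versus-global minimum concern you raise for $M_{dq}$ is genuine and is equally unaddressed in the paper---your proposed face-by-face boundary check goes beyond what the paper itself attempts.
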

The Corollary \ref{corollary}proposes a lower bound on the monotonicity of the total concurrence $\mathcal{C}^{t}_q(\rho)$ for any $q$. Specifically, for any $q \geq h$, $\mathcal{C}^{t}_q(\rho)$ can be estimated as a function of $q$ and $h$. This means that, under appropriate conditions, the total concurrence does not decrease as $q$ increases. This monotonicity allows one to estimate the entanglement for different values of $q$. Physically, Corollary \ref{corollary} suggests that the sensitivity of the entanglement increases as $q$ increases, thus providing a range of adjustments to the parameters such that the actual estimation does not require a complete recalculation for each $q$.
By utilizing Corollary \ref{corollary}, we give the main conclusions of our analysis as follows.
\begin{theorem}\label{thma}
For a general bipartite state $\rho\in\mathcal{D}$, we obtain the following results
\small{\begin{equation}
\label{text100}
\mathcal{C}^{t}_q\left(\rho\right)\geq\frac{d-d^{1-q}(1+(d-1)^{q})}{(d-1)^{2}}
\left[\max\left(\left\|\rho^{\Gamma}\right\|_1,\left\|\mathcal{R}
\left(\rho\right)\right\|_1\right)-1\right]^2
\end{equation}}
for $q\geq2$ with $d\geq3$  or $q\geq4$ with $d=2$. Additionally, we have the following bound
\begin{align}\label{corbound4}
\mathcal{C}^{t}_q\left(\rho\right)>\frac{1-2^{1-q}}{1-2^{1-s}}\left[\max
\left(\left\|\rho^{\Gamma}\right\|_1,\left\|\mathcal{R}
\left(\rho\right)\right\|_1\right)-1\right]^2
\end{align}
for $s\leq q<4$ with $d=2$.
\end{theorem}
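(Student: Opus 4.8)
The plan is to prove the two bounds first for pure states and then extend them to arbitrary mixed states through the convex-roof definition (\ref{Cq2}), using the monotonicity in $q$ recorded in Corollary \ref{corollary} to pass from a convenient base exponent to the full range. The starting observation is that on a pure state $|\Phi\rangle$ written in the Schmidt form (\ref{ac1}) the two separability criteria coincide: by (\ref{text1}) one has $\|\rho^\Gamma\|_1=\|\mathcal{R}(\rho)\|_1=(\sum_i\sqrt{\lambda_i})^2$, so that $X:=\max(\|\rho^\Gamma\|_1,\|\mathcal{R}(\rho)\|_1)=\|\rho^\Gamma\|_1\geq 1$ is a single well-understood function of the Schmidt coefficients, and I only need to track it on pure states.

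For $d\geq 3$ I would take $q=2$ as the base. Combining the identity $\mathcal{C}^t_2(|\Phi\rangle)=2C_2(|\Phi\rangle)$ with the known $q=2$ bound (\ref{b2p}) gives $\mathcal{C}^t_2(|\Phi\rangle)\geq \frac{\mu_2}{(d-1)^2}(X-1)^2$ with $\mu_2=2(d-1)/d$. Since for $d\geq 3$ the normalized function $f(q)$ of (\ref{bound1}) is nondecreasing for all $q\geq 2$, Corollary \ref{corollary} with $h=2$ multiplies this base estimate by $\mu_q/\mu_2$ and reproduces the pure-state version of (\ref{text100}) for every $q\geq 2$. The case $d=2$ is more delicate because $f$ is only known to be nondecreasing for $q\geq s$, so $q=2$ is no longer an admissible base; instead I would build the base directly at $q=4$ for the strong bound (\ref{text100}) and at $q=s$ for the weak bound (\ref{corbound4}), and then propagate with Corollary \ref{corollary} using $h=4$ and $h=s$ respectively.

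These $d=2$ base estimates reduce to one-variable inequalities. Setting $\lambda_1=t$, $\lambda_2=1-t$, one has $\mathcal{C}^t_q(|\Phi\rangle)=2\bigl(1-t^q-(1-t)^q\bigr)$ and $(X-1)^2=4t(1-t)$, so the strong bound at $q=4$ is equivalent to $g(t):=2\bigl(1-t^4-(1-t)^4\bigr)-\mu_4\,4t(1-t)\geq 0$ on $[0,1]$, and the weak bound at $q=s$ to the same inequality with the coefficient $1$ replacing $\mu_q/(d-1)^2$. Each candidate function vanishes at $t=0,1$ and is symmetric about $t=1/2$, where $g(1/2)=0$; the second-derivative test at $t=1/2$ reduces to $2^q\geq q^2-q+2$, which holds for $q\geq 4$, while for the weak version the value at the midpoint is $\tfrac12-2^{1-s}>0$, accounting for the strict inequality in (\ref{corbound4}). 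Feeding these base bounds into Corollary \ref{corollary} then restores the advertised coefficients $\mu_q/(d-1)^2$ for $q\geq 4$ and $(1-2^{1-q})/(1-2^{1-s})$ for $s\leq q<4$.

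It remains to lift the pure-state bounds to a mixed state $\rho=\sum_i p_i|\Phi_i\rangle\langle\Phi_i|$ realizing the infimum in (\ref{Cq2}). Linearity of the partial transpose and of the realignment map together with the triangle inequality for $\|\cdot\|_1$ give $\max(\|\rho^\Gamma\|_1,\|\mathcal{R}(\rho)\|_1)-1\leq\sum_i p_i(X_i-1)$, where $X_i\geq 1$ is the pure-state quantity for $|\Phi_i\rangle$. Applying the pure-state bound term by term and then Jensen's inequality for the convex map $x\mapsto x^2$ yields $\mathcal{C}^t_q(\rho)=\sum_i p_i\mathcal{C}^t_q(|\Phi_i\rangle)\geq K_q\sum_i p_i(X_i-1)^2\geq K_q\bigl(\max(\|\rho^\Gamma\|_1,\|\mathcal{R}(\rho)\|_1)-1\bigr)^2$, with $K_q$ the relevant coefficient, which is exactly (\ref{text100}) and (\ref{corbound4}). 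I expect the main obstacle to be the $d=2$ base analysis: establishing the one-variable inequalities globally on $(0,1)$ rather than only at the symmetric point $t=1/2$, and thereby justifying that $q=4$ is the exact threshold separating the strong bound from the weaker estimate valid on $s\leq q<4$.
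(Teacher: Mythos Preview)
Your overall plan coincides with the paper's proof: first establish the bound on pure states, use Corollary~\ref{corollary} to move between exponents, then lift to mixed states via the convex roof using the triangle inequality for $\|\cdot\|_1$ and convexity of $x\mapsto x^2$. The $d\ge3$ case and the mixed-state extension are handled exactly as you sketch.

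The one place where you and the paper differ is the $d=2$ base estimate at $q=4$, which you correctly flag as the sticking point. The paper does not attack the one-variable inequality $g(t)\ge0$ by calculus; instead it observes the algebraic identity (in your notation, with $u=t(1-t)$) that $1-t^4-(1-t)^4=2u(2-u)$, which packaged with the normalizations reads
\[
f(4)=f(2)\Bigl(\tfrac{8}{7}-\tfrac{f(2)}{7}\Bigr)\qquad(d=2).
\]
Since the normalized quantity $\mathcal{C}^t_2=f(2)\le1$, this immediately gives $\mathcal{C}^t_4\ge\mathcal{C}^t_2=(X-1)^2$ on every pure state with no further analysis, and Corollary~\ref{corollary} with $h=4$ then delivers the full $q\ge4$ range. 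This single identity replaces your proposed local check at $t=1/2$ and removes the ``global one-variable inequality'' obstacle entirely. For the range $s\le q<4$ the paper proceeds just as you do, reducing via Corollary~\ref{corollary} with $h=s$ to the comparison $\mathcal{C}^t_s>\mathcal{C}^t_2$, which it attributes to the monotonicity of $\mathcal{C}^t_q$ in $q$; your midpoint computation is in the same spirit. One caution: in your $d=2$ display you wrote $\mathcal{C}^t_q(|\Phi\rangle)=2\bigl(1-t^q-(1-t)^q\bigr)$, which is the \emph{unnormalized} $C^t_q$; the definition~(\ref{Cq1}) carries the factor $1/\mu_q$, so be sure the coefficient appearing in your $g(t)$ is consistent with that normalization.
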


$\mathit{Proof}$. For the pure state defined  in (\ref{ac1}), from  (\ref{bound3}) and (\ref{text006}), it follows that the function  $f(q)$ in (\ref{bound1}) is monotonically increasing with respect to $q$ for $q\geq2$ and $d\geq3$, or when $d=2$ and $q\geq s$.

For $d\geq3$, the following result holds
\begin{align}\label{theorem4}
\mathcal{C}^{t}_q\left(|\psi\rangle\right)&\geq\frac{d-d^{1-q}(1+(d-1)^{q})}
{d-d^{-1}(1+(d-1)^{2})}\mathcal{C}^{t}_2\left(|\psi\rangle\right) \nonumber\\
&\geq\frac{d-d^{1-q}(1+(d-1)^{q})}{d-d^{-1}(1+(d-1)^{2})}\frac{2}{d(d-1)}
\left(\|\delta^{\Gamma}\|_1-1\right)^2\nonumber\\
&=\frac{d-d^{1-q}(1+(d-1)^{q})}{(d-1)^{2}}
\left(\|\delta^{\Gamma}\|_1-1\right)^2,
\end{align}
for $q\geq2$, with $\delta=|\psi\rangle\langle\psi|$, the first inequality arises from the monotonicity of $f\left(q\right)$, while the second  follows from  (\ref{b2p}).

For $d=2$, following an approach analogous to (\ref{theorem4}), we derive the following result for $q\geq4$,
\begin{align}\label{theor5}
\mathcal{C}^{t}_q(|\psi\rangle)&\geq\frac{2-2^{2-q}}{2-2^{-2}}
\mathcal{C}^{t}_4(|\psi\rangle)\nonumber\\
&=\frac{4}{7}(2-2^{2-q})\mathcal{C}^{t}_4(|\psi\rangle)\nonumber\\
&= (2-2^{2-q})\mathcal{C}^{t}_2(|\psi\rangle)(\frac{8}{7}-\frac{\mathcal{C}^{t}_2(|\psi\rangle)}{7}) \nonumber\\
&\geq(2-2^{2-q})\mathcal{C}^{t}_2(|\psi\rangle)\nonumber\\
&\geq(2-2^{2-q})\left(\|\delta^{\Gamma}\|_1-1\right)^2,
\end{align}
where the second equality arises from the relation $f\left(4\right)=f(2)(\frac{8}{7}-\frac{f(2)}{7})$.

For the range $s\leq q<4$, we obtain the following
\begin{align}\label{case1}
\mathcal{C}^{t}_q(|\psi\rangle)&\geq\frac{2-2^{2-q}}{2-2^{2-s}}\mathcal{C}^{t}_s(|\psi\rangle)\nonumber\\
&>\frac{2-2^{2-q}}{2-2^{2-s}}\mathcal{C}^{t}_2(|\psi\rangle)\nonumber\\
&>\frac{2-2^{2-q}}{2-2^{2-s}}\left(\|\delta^{\Gamma}\|_1-1\right)^2\nonumber\\
&=\frac{1-2^{1-q}}{1-2^{1-s}}\left(\|\delta^{\Gamma}\|_1-1\right)^2,
\end{align}
here the second inequality follows from the monotonicity of $\mathcal{C}^{t}_q(|\psi\rangle)$  with respect to $q$.

Consider the optimal pure-state decomposition of $\rho$ for $\mathcal{C}^{t}_q\left(\rho\right)$, which can be expressed as $\rho=\sum_{i}p_i|\psi_i\rangle\langle\psi_i|$. Under the conditions $q\geq2$  when $d\geq3$ or $q\geq4$  when $d=2$, we establish the following result,
\begin{align}\label{text9}
\mathcal{C}^{t}_q\left(\rho\right)&=\sum_ip_i\mathcal{C}^{t}_q\left(|\psi_i\rangle\right)\nonumber\\
&\geq \frac{d-d^{1-q}(1+(d-1)^{q})}{(d-1)^{2}}\sum_ip_i\left(\left\|\rho_i^{\Gamma}\right\|_1-1\right)^2\nonumber\\
&\geq \frac{d-d^{1-q}(1+(d-1)^{q})}{(d-1)^{2}}\left(\sum_ip_i\left\|\rho_i^{\Gamma}\right\|_1-1\right)^2\nonumber\\
&\geq  \frac{d-d^{1-q}(1+(d-1)^{q})}{(d-1)^{2}}\left(\left\|\rho^{\Gamma}\right\|_1-1\right)^2,
\end{align}
here $\rho_i=|\psi_i\rangle\langle\psi_i|$. The first inequality is a direct result of  (\ref{theorem4}) and (\ref{theor5}), while the second from the convexity of the function $\mathit{f}\left(y\right)=y^2$. The final  follows from the convexity property of the trace norm, as well as from the condition $\|\rho^\Gamma\| \geq 1$ stated in  (\ref{text1}).

From (\ref{text1}), and following the reasoning similar to that in (\ref{theorem4}), (\ref{theor5}), and (\ref{text9}), we obtain the following bound
\begin{equation}\label{text10}
\mathcal{C}^{t}_q\left(\rho\right)\geq\frac{d-d^{1-q}(1+(d-1)^{q})}{(d-1)^{2}}
\left(\left\|\mathcal{R}\left(\rho\right)\right\|_1-1\right)^2,
\end{equation}
this holds for the cases where $q\geq2$ with $d\geq3$, and $q\geq4$ with $d=2$.$\hfill\qedsymbol$

Theorem \ref{thma} establishes an explicit connection between the $\mathcal{C}^{t}_q$-concurrence and two widely used entanglement criteria, the PPT and realignment criteria. It is seen that the bound becomes informative when $q \geq 2$ with $d \geq 3$ or $s\leq q<4$ with $d=2$. Corollary \ref{corollary} and Theorem \ref{thma} together provide a comprehensive framework for understanding and estimating the $\mathcal{C}^{t}_q$-concurrence. The former ensures the consistency and comparability of entanglement values across $q$, while the latter grounds $\mathcal{C}^{t}_q$ in operationally meaningful specifications. This double usefulness enhances the theoretical completeness and experimental accessibility of the overall consistency framework.

Theorem \ref{thma} presents the lower bounds for $\mathcal{C}^{t}_q\left(\rho\right)$.
In the following section, we proceed with the analytical calculation of the $\mathcal{C}^{t}_q$-concurrence for two specific types of quantum states, isotropic states and Werner states.

\section{$\mathcal{C}^{t}_q$-concurrence of Isotropic and Werner states}\label{sect4}
Let $\mathcal{C}$ represent a quantum entanglement measure, and let $\mathcal{S}$ be the set of all states, with $\mathcal{P}$ as the subset of pure states within $\mathcal{S}$. Consider $\mathcal{G}$ to be a compact group acting on $\mathcal{S}$ through the transformation $\left(U, \varrho\right) \mapsto U \varrho U^\dagger$. We assume that the entanglement measure $\mathcal{C}$ is invariant under the action of $\mathcal{G}$ on the pure states within $\mathcal{P}$. We can define a projection operator $\mathbf{P}: \mathcal{S} \to \mathcal{S}$ such that $\mathbf{P} \varrho = \int dU \, U \varrho U^\dagger$, where $dU$ represents the normalized Haar measure on the group $\mathcal{G}$. Additionally, we introduce the function $\zeta$ defined on the set $\mathbf{P}\mathcal{S}$ as follows \cite{Vollbrecht062307}
\begin{equation}\label{p30}
\zeta\left(\varrho\right)=\min\left\{\mathcal{C}\left(\ket{\phi}\right):\ket{\phi}\in \mathcal{P},\,\mathbf{P}|\phi\rangle\langle\phi|=\varrho\right\}.
\end{equation}

For any state $\varrho \in \mathbf{P}S$, the entanglement measure $\mathcal{C}(\varrho)$ can then be expressed as
\begin{equation}\label{p31}
\mathcal{C}\left(\varrho\right)=co\left(\zeta\left(\varrho\right)\right),
\end{equation}
where $co(\zeta)$ denotes the convex-roof extension of the function $\zeta$, which is defined as the convex hull of $\zeta$.
\subsection{Isotropic states}

Isotropic states, denoted as $\varrho_\mathcal{F}$\cite{Horodecki4206}. Specifically, for a given fidelity parameter $\mathcal{F}$, an isotropic state can be expressed as
\begin{equation}
\varrho_\mathcal{F}=\frac{1-\mathcal{F}}{d^2-1}\left(I-|\Phi^+\rangle\langle\Phi^+|\right)
+\mathcal{F}|\Phi^+\rangle\langle\Phi^+|,
\end{equation}
The maximally entangled pure state is represented by $\ket{\Phi^+}=\frac{1}{\sqrt{d}}\sum_{k=1}^d\ket{kk}$, where $I$  denotes the identity operator. The fidelity of the state $\varrho_\mathcal{F}$ with respect to $\ket{\Phi^+}$, denoted as $\mathcal{F}$, is defined by $\mathcal{F}=\braket{\Phi^+|\varrho_\mathcal{F}|\Phi^+}$, where $\mathcal{F}$ lies in the range $0 \leq \mathcal{F} \leq 1$. The isotropic state $\varrho_\mathcal{F}$ is separable when $\mathcal{F}\leq 1/d$.

Additionally, the state $\varrho_\mathcal{F}$ remains invariant under the transformation $ V \otimes V^* $ for any unitary operator $V $, as demonstrated in \cite{Horodecki4206}. The entanglement characteristics of isotropic states have been thoroughly investigated, with several measures being explored, including the EoF \cite{Terhal2625}, tangle and concurrence \cite{Rungta012307}, and the R$\mathrm{\acute{e}}$nyi \( \alpha \)-entropy entanglement \cite{Wang022324}. Furthermore, it has been established that when $\mathcal{F} > 1/d$, the trace norm of the partial transpose $\varrho_\mathcal{F}^\Gamma $ and the realigned matrix $\mathcal{R}(\varrho_\mathcal{F})$ are identical, such that $ \|\varrho_\mathcal{F}^\Gamma\|_1 = \|\mathcal{R}(\varrho_\mathcal{F})\|_1 = d\mathcal{F} $ \cite{Rudolph219,Vidal032314}.

Building upon the methods introduced in \cite{Terhal2625, Rungta012307}, we proceed to derive the expression for the $\mathcal{C}^{t}_q$-concurrence of the $\varrho_\mathcal{F}$, with a detailed derivation provided in Appendix \ref{b}. The corresponding expression for the $\mathcal{C}^{t}_q$-concurrence is as follows
\begin{equation}
\mathcal{C}^{t}_q\left(\varrho_\mathcal{F}\right)=co\left(\zeta\left(\mathcal{F},q,d\right)\right),
\end{equation}
here, $co\left(\zeta\left(\mathcal{F},q,d\right)\right)$ denotes the largest convex function that is bounded above by $\zeta\left(\mathcal{F},q,d\right)$, which can be explicitly expressed as
\begin{equation}\label{example10}
\zeta\left(\mathcal{F},q,d\right)=d-(\chi^{2q}+(1-\chi^{2})^{q})-\left(d-1\right)
(\sigma^{2q}+(1-\sigma^{2})^{q}),
\end{equation}
Here, the parameters $ \chi $  and $  \sigma $  are defined as
$\chi=\frac{\sqrt{\mathcal{F}}+\sqrt{\left(d-1\right)\left(1-\mathcal{F}\right)}}{\sqrt{d}}$ and $\sigma=\frac{1}{\sqrt{d}}(\sqrt{\mathcal{F}}-\frac{\sqrt{1-\mathcal{F}}}{\sqrt{d-1}})$
as derived in \cite{Yang052423}.

Next, we demonstrate the tightness of the lower bound given in (\ref{text100}) by applying it to isotropic states. We begin by considering the specific case where $q=3$.


(i) In the special case where $d=2$, the expression in (\ref{example10}) reduces to the following form
\begin{align}\label{example2}
\zeta\left(\mathcal{F},3,2\right)=\left(2\mathcal{F}-1\right)^2,
\end{align}
which is valid for $\mathcal{F}$ within the interval $(\frac{1}{2},1]$.

Since the $\frac{\partial^2}{\partial \mathcal{F}^2} \zeta(\mathcal{F}, 3, 2) \geq 0$, it follows that
\begin{equation}\label{example3}
\mathcal{C}^{t}_3\left(\varrho_\mathcal{F}\right)=\begin{cases}
0,       & \mathcal{F}\leq \frac{1}{2},\\[2mm]
\displaystyle\left(2\mathcal{F}-1\right)^2,  &\frac{1}{2}< \mathcal{F}\leq 1,
\end{cases}
\end{equation}
which directly corresponds to the lower bound given in (\ref{text100}).
As a result, for the case where $q=3$ and $d=2$, the lower bound provided in (\ref{text100}) coincides with the exact value of the $\mathcal{C}^{t}_q$-concurrence for any two-qubit isotropic state $\varrho_\mathcal{F}$.


In fact, from (\ref{bound1}), it follows that for $d=2$, the function satisfies the relation $f\left(3\right)=2f\left(2\right)$. Consequently, $\mathcal{C}^{t}_3\left(\varrho_\mathcal{F}\right)=2\mathcal{C}^{t}_2\left(\varrho_\mathcal{F}\right)$,
which is consistent with the previously established expression for the $\mathcal{C}^{t}_2$-concurrence of $\varrho_\mathcal{F}$ \cite{Rungta012307}. This result confirms that the lower bound given in (\ref{text100}) is exact for both $\mathcal{C}^{t}_2\left(\varrho_\mathcal{F}\right)$ and $\mathcal{C}^{t}_3\left(\varrho_\mathcal{F}\right)$ when $d=2$.

(ii)  For the case of $d=3$,  (\ref{example10}) simplifies to
\begin{equation}\label{text0331}
\zeta\left(\mathcal{F},3,3\right)=3-(\chi^{6}+(1-\chi^{2})^{3})-2(\sigma^{6}+(1-\sigma^{2})^{3}),
\end{equation}
for any $\mathcal{F}\in(1/3,1]$, where the parameters $\chi $ and $ \sigma $ are given by
$\chi=\frac{\sqrt{\mathcal{F}}+\sqrt{2\left(1-\mathcal{F}\right)}}{\sqrt{3}}$ and
$\sigma=\frac{1}{\sqrt{3}}(\sqrt{\mathcal{F}}-\frac{\sqrt{1-\mathcal{F}}}{\sqrt{2}})$.

Given that $\frac{\partial}{\partial \mathcal{F}} \zeta(\mathcal{F}, 3, 3) > 0$, the function $\zeta\left(\mathcal{F},3,3\right)$ increases monotonically in the region where $\varrho_\mathcal{F}$  maintains entanglement. This trend is illustrated in FIG. \ref{f3}. When $\mathcal{F}\geq 0.94$, the $\frac{\partial^2}{\partial \mathcal{F}^2} \zeta(\mathcal{F}, 3, 3) < 0$, indicating that $\zeta\left(\mathcal{F},3,3\right)$ loses its convexity in the range $\mathcal{F}\in[0.94,1]$. Since $\mathcal{C}^{t}_3\left(\varrho_\mathcal{F}\right)$ is defined as the largest convex function that remains bounded above by (\ref{text0331}), we ensure convexity by linearly interpolating between the points at  $\mathcal{F}=0.94$ and $\mathcal{F}=1$.

Thus, we obtain the following result, as illustrated in FIG. \ref{f3},
\begin{equation}\label{c333}
\mathcal{C}^{t}_3\left(\varrho_\mathcal{F}\right)=\begin{cases}
0,       & \mathcal{F}\leq 1/3,\\
\zeta\left(\mathcal{F},3,3\right), &1/3< \mathcal{F}\leq0.94,\\
2.23\mathcal{F}-1.23  ,  &0.94< \mathcal{F}\leq 1.
\end{cases}
\end{equation}
From Theorem \ref{thma}, we derive the following lower bound
\begin{equation}\label{exc33}
\mathcal{C}^{t}_3\left(\varrho_\mathcal{F}\right)\geq\frac{\left(3\mathcal{F}-1\right)^2 }{4}.
\end{equation}
As depicted in FIG. \ref{d3q3}, this lower bound in (\ref{exc33}) is shown to be tight.
\begin{figure}[htbp]
    \begin{minipage}[t]{0.9\linewidth}
        \centering
        \includegraphics[width=\textwidth]{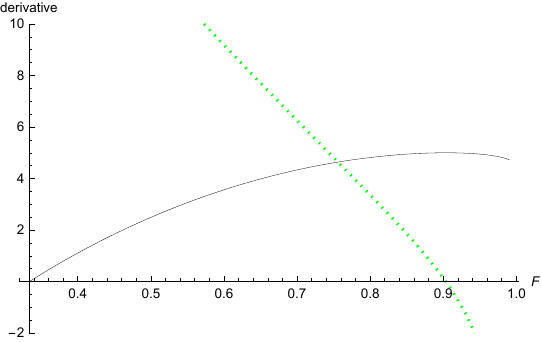}
        \caption{ The plot illustrates the first (solid black line) and second (dashed green line) derivatives of $\zeta\left(\mathcal{F},3,3\right)$ with respect to $\mathcal{F}$. The solid black line represents the first derivative, while the dashed green line shows the second derivative of the function.}
    \label{f3}
    \end{minipage}%
    \hspace{.20in}
    \begin{minipage}[t]{0.9\linewidth}
        \centering
        \includegraphics[width=\textwidth]{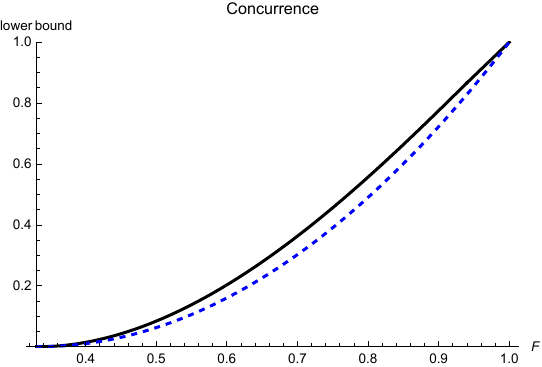}
   \caption{ The plot shows $\mathcal{C}^{t}_3\left(\varrho_\mathcal{F}\right)$ versus $\mathcal{F}$  for $d=3$. The solid black line represents the exact value given by equation (\ref{c333}), while the dashed blue line represents the lower bound given by equation (\ref{exc33}).}
    \label{d3q3}
    \end{minipage}%
    \end{figure}

For $q=4$ in the scenario where $d=2$, we obtain the following result,
\begin{align}\label{example4}
\zeta\left(\mathcal{F},4,2\right)=2-(\chi^{8}+(1-\chi^{2})^{4})-(\sigma^{8}+(1-\sigma^{2})^{4})
\end{align}
for any $\mathcal{F}\in(\frac{1}{2},1]$. Since the $\frac{\partial^2}{\partial \mathcal{F}^2} \zeta(\mathcal{F}, 4, 2) \geq 0$, it follows that
\begin{equation}\label{example5}
\mathcal{C}^{t}_4\left(\varrho_\mathcal{F}\right)=\begin{cases}
0,       & \mathcal{F}\leq \frac{1}{2},\\[2mm]
\displaystyle\frac{7+4\mathcal{F}(1-\mathcal{F})}{7}\left(2\mathcal{F}-1\right)^2,  &\frac{1}{2}< \mathcal{F}\leq 1.
\end{cases}
\end{equation}

From Theorem \ref{thma}, we derive the following lower bound
\begin{equation}\label{example42}
\mathcal{C}^{t}_4\left(\varrho_\mathcal{F}\right)\geq \left(2\mathcal{F}-1\right)^2.
\end{equation}
As shown in FIG. \ref{d2q4}, this lower bound in (\ref{example42}) is proven to be tight.

For $q=4$ and $d=3$,  (\ref{example10}) reduces to
\begin{equation}\label{text031}
\zeta\left(\mathcal{F},4,3\right)=3-(\chi^{8}+(1-\chi^{2})^{4})-2(\sigma^{8}+(1-\sigma^{2})^{4})
\end{equation}
for any $\mathcal{F}\in(1/3,1]$, the parameters $\chi $ and $ \sigma $ are defined as
$\chi=\frac{\sqrt{\mathcal{F}}+\sqrt{2\left(1-\mathcal{F}\right)}}{\sqrt{3}}$ and $\sigma=\frac{1}{\sqrt{3}}(\sqrt{\mathcal{F}}-\frac{\sqrt{1-\mathcal{F}}}{\sqrt{2}})$.
Since the $\frac{\partial}{\partial \mathcal{F}} \zeta(\mathcal{F}, 4, 3) > 0$, the function $\zeta\left(\mathcal{F},4,3\right)$ exhibits a monotonically increasing trend within the range where $\varrho_\mathcal{F}$ remains entangled. This behavior is illustrated in FIG. \ref{f4}.
\begin{figure}[htbp]
    \begin{minipage}[t]{0.9\linewidth}
        \centering
        \includegraphics[width=\textwidth]{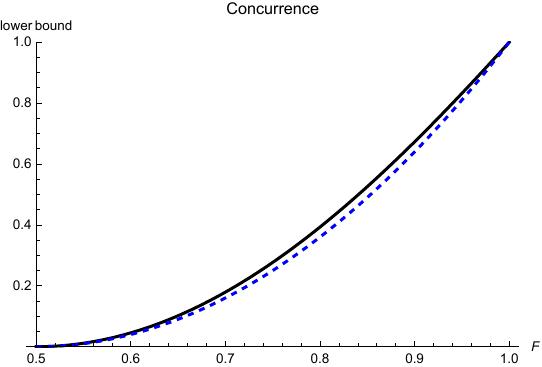}
        \caption{The plot of $\mathcal{C}^{t}_4\left(\varrho_\mathcal{F}\right)$ versus $\mathcal{F}$ for $d=2$ is as follows: the solid black line represents the expression in (\ref{example5}), while the dashed blue line corresponds to the lower bound given by (\ref{example42}).}
  \label{d2q4}
    \end{minipage}%
    \hspace{.20in}
    \begin{minipage}[t]{0.9\linewidth}
        \centering
        \includegraphics[width=\textwidth]{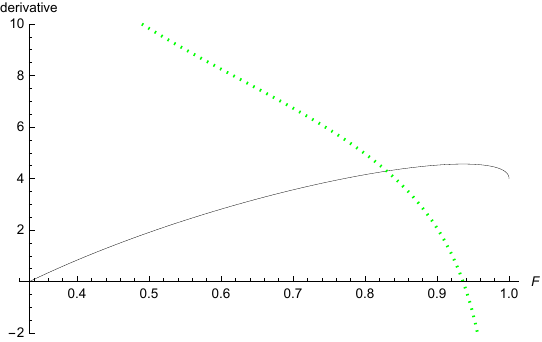}
   \caption{The first (solid black line) and second (dashed green line) derivatives of $\zeta\left(\mathcal{F},4,3\right)$ with respect to $\mathcal{F}$ are shown in the plot. The solid black line represents the first derivative, while the dashed green line represents the second derivative.}
  \label{f4}
    \end{minipage}%
\end{figure}
When $\mathcal{F}\geq 0.904$, the $\frac{\partial^2}{\partial \mathcal{F}^2} \zeta(\mathcal{F}, 4, 3) < 0$, indicating that $\zeta\left(\mathcal{F},4,3\right)$ loses its convexity in the interval $\mathcal{F}\in[0.904,1]$. To maintain the convexity of $\mathcal{C}_4\left(\varrho_\mathcal{F}\right)$, which is the largest convex function bounded above by (\ref{text031}), we connect the points at $\mathcal{F} = 0.904$ and $\mathcal{F} = 1$ with a straight line.
As a result, the curve presented in FIG. \ref{f4} is obtained,
\begin{equation}\label{c33}
\mathcal{C}^{t}_4\left(\varrho_\mathcal{F}\right)=\begin{cases}
0,       & \mathcal{F}\leq 1/3,\\
\zeta\left(\mathcal{F},4,3\right), &1/3< \mathcal{F}\leq0.904,\\
2.0658 \mathcal{F}-1.06566,  &0.904< \mathcal{F}\leq 1.
\end{cases}
\end{equation}


By utilizing Theorem \ref{thma}, we derive the following lower bound for the $\mathcal{C}_4^t$-concurrence of the  $\varrho_\mathcal{F}$,
\begin{equation}\label{exc3}
\mathcal{C}^{t}_4\left(\varrho_\mathcal{F}\right)\geq\frac{\left(3\mathcal{F}-1\right)^2 }{4}.
\end{equation}
As clearly illustrated in FIG. \ref{d3q4}, this lower bound is tight.

\begin{figure}[htbp]
    \begin{minipage}[t]{0.9\linewidth}
        \centering
        \includegraphics[width=\textwidth]{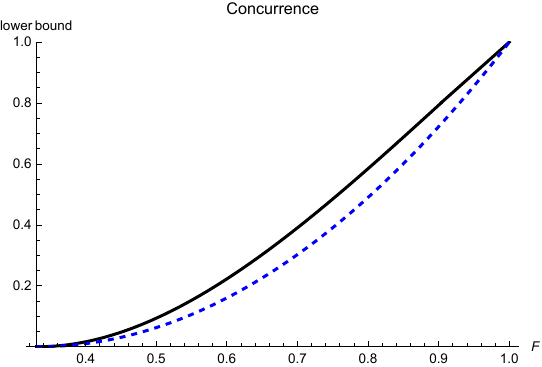}
        \caption{The plot of $\mathcal{C}^{t}_4\left(\varrho_\mathcal{F}\right)$ versus $\mathcal{F}$ for $d=3$ shows the following: the solid black line represents the expression given by (\ref{c33}), while the dashed blue line corresponds to the lower bound of (\ref{exc3}).}
  \label{d3q4}
    \end{minipage}%
    \hspace{.20in}
    \begin{minipage}[t]{0.9\linewidth}
        \centering
        \includegraphics[width=\textwidth]{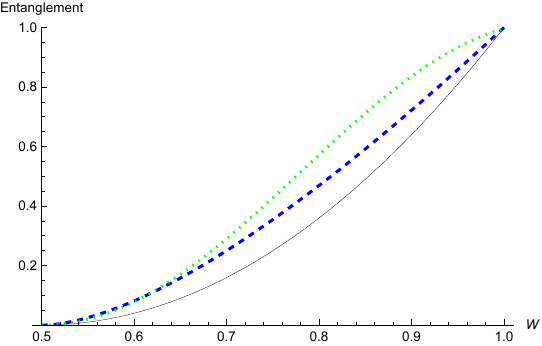}
   \caption{For the case of Werner states, the plot displays the following entanglement measures: the solid (black) line represents the $\mathcal{C}^{t}_2$-concurrence, $\mathcal{C}^{t}_2\left(\varrho_\textit{w}\right)$; the dashed (blue) line corresponds to the EoF, $E_F\left(\varrho_\textit{w}\right)$; and the dot-dashed (green) line illustrates the $\mathcal{C}^{t}_8$-concurrence, $\mathcal{C}^{t}_8\left(\varrho_\textit{w}\right)$.}
  \label{f2}
    \end{minipage}%
    \end{figure}

\subsection{Werner states}
For Werner states, the general expressions are
\begin{align}\label{p4wer1}
\varrho_\textit{w}=&\frac{2\left(1-\textit{w}\right)}{d\left(d+1\right)}\left(\sum_{i=1}^d|ii\rangle\langle ii|+\sum_{l<k}|\Phi_{lk}^+\rangle\langle\Phi_{lk}^+|\right)\nonumber\\
&+\frac{2\textit{w}}{d\left(d-1\right)}\sum_{l<k}|\Phi_{lk}^-\rangle\langle\Phi_{lk}^-|,
\end{align}
the states $\ket{\Phi_{lk}^{\pm}}$ are defined as $\ket{\Phi_{lk}^{\pm}}=(\ket{lk}\pm\ket{kl})/\sqrt{2}$, where $\ket{lk}$ and $\ket{kl}$ represent the standard basis states in the Hilbert space of the two subsystems.
The mixing parameter $\textit{w}$ is determined by the trace of the projection operator onto the antisymmetric subspace and is explicitly given by
$\textit{w}=\mathrm{tr}(\varrho_\textit{w}\sum_{l<k}|\Phi_{lk}^-\rangle\langle\Phi_{lk}^-|)$
as established in \cite{Lee062304}.  $\varrho_\textit{w}$ is separable iff $0\leq \textit{w}\leq \frac{1}{2}$ \cite{Vollbrecht062307,Werner4277}.
For values of $\textit{w}>\frac{1}{2}$, the state exhibits entanglement, and its entanglement measure can be explicitly formulated as follows, with a detailed derivation provided in Appendix \ref{c},
\begin{equation}\label{text32wer}
\zeta\left(\varrho_\textit{w}\right)
=2\left(1-\left(\frac{1+G}{2}\right)^{q}
-\left(\frac{1-G}{2}\right)^{q}\right)
\end{equation}
where $G=2\sqrt{\textit{w}(1-\textit{w})}$, for any $q\geq2$. This expression gives the entanglement of the Werner state for $\textit{w}\leq\frac{1}{2}$, which depends on both $\textit{w}$ and the power $q$.

Let us begin by considering the specific case where $q=3$ and $d=2$. In this case, the expression (\ref{text32wer}) simplifies to
\begin{align}\label{example2}
\zeta\left(\varrho_\textit{w}\right)=(2\textit{w}-1)^2,
\end{align}
which holds for $\textit{w}\leq\frac{1}{2}$.

Since the $\frac{\partial^2}{\partial w^2} \zeta(\varrho_w) \geq 0$, it follows that
\begin{equation}\label{example3}
\mathcal{C}^{t}_3\left(\varrho_\textit{w}\right)=\begin{cases}
0,       & \textit{w}\leq \frac{1}{2},\\[2mm]
\displaystyle\left(2\textit{w}-1\right)^2,  &\textit{w}\leq \frac{1}{2},
\end{cases}
\end{equation}
this behavior corresponds exactly to our lower bound given in equation (\ref{text100}). As a result, for the case where \( q = 3 \) and \( d = 2 \), the lower bound expressed in (\ref{text100}) exactly matches the value of the \( \mathcal{C}^{t}_q \)-concurrence for the  \( \varrho_\textit{w} \).
This confirms that our bound is tight in this scenario.

The concurrence of Werner states was derived in \cite{Hiroshima044302}, where it is given by $C\left(\varrho_\textit{w}\right)=2\textit{w}-1$  for $\textit{w}>1/2$. It is straightforward to observe that for $q=2$,  $\mathcal{C}^{t}_q\left(\varrho_\textit{w}\right)=C^2\left(\varrho_\textit{w}\right)$. Additionally, the EoF for Werner states is provided in \cite{Vollbrecht062307} as $E_F\left(\varrho_\textit{w}\right)=h[\frac{1+\sqrt{1-C^2}}{2}]$, where $h[x]=-x\log_2 x-(1-x)\log_2(1-x)$ is the binary entropy function. This gives rise to the relationship between $E_F\left(\varrho_\textit{w}\right)$ and $\mathcal{C}^{t}_2\left(\varrho_\textit{w}\right)$, specifically, $E_F\left(\varrho_\textit{w}\right)=h[\frac{1+\sqrt{1-\mathcal{C}^{t}_2}}{2}]$.
In FIG. \ref{f2}, we show the relationships between the EoF and the $\mathcal{C}^{t}_q$-concurrences of Werner states for $q=2$ and $q=8$. The EoF $E_F\left(\varrho_\textit{w}\right)$ is always bounded above by the $\mathcal{C}^{t}_8$-concurrence $ \mathcal{C}^{t}_8\left(\varrho_\textit{w}\right)$, and is greater than the $\mathcal{C}^{t}_2$-concurrence $ \mathcal{C}^{t}_2\left(\varrho_\textit{w}\right)$ when $\textit{w}>0.5$.

\section{ Monogamy inequality of the $\mathcal{C}^{t}_q $-concurrence in multipartite systems}\label{sect5}

The distribution of entanglement is a significant issue in entanglement theory and quantum information processing. Monogamy relations highlight a fundamental aspect of connections among a specific subset of bipartite entanglements within a multiparty system. The COFFMAN052306 inequality proves quantitatively the original monogamy of entanglement \cite{COFFMAN052306}.


Consider a pure state $|\phi\rangle_{AB}$ in the bipartite system ${\cal C}^2\otimes {\cal C}^d$, where subsystem $A$ is a qubit $( \mathcal{C}^2)$ and subsystem $ B$ is a finite-dimensional space $( \mathcal{C}^d)$. The state can be written in its Schmidt decomposition form as
$|\phi\rangle_{AB}=\sqrt{\lambda_0}|0\rangle|\phi_0\rangle+\sqrt{\lambda_1}|1\rangle|\phi_1\rangle$,
where $\lambda_0 $ and $\lambda_1$ are the Schmidt coefficients, and $ \{|\phi_i\rangle\} $ are orthogonal states in subsystem $B$ (i.e., $\langle \phi_i | \phi_j \rangle = \sigma_{ij}$).
From the expression in (\ref{Cq1}), we can derive the $\mathcal{C}^{t}_q$-concurrence for the pure state $ |\phi\rangle_{AB}$, which is given by
\begin{eqnarray}
\mathcal{C}^{t}_q(|\phi\rangle_{AB})=\frac{1}{\mu}{C}^t_q(\varrho_A),\nonumber
\end{eqnarray}
here, $\varrho_A={\rm tr}_B(|\phi\rangle_{AB}\langle\phi|)$, while $\mu=2-2^{2-q}$ serves as the normalization factor.
Explicitly, we have
\begin{eqnarray}
\mathcal{C}^{t}_q(|\phi\rangle_{AB})=\frac{1}{1-2^{1-q}}(1-{\lambda_0}^q-{\lambda_1}^q)\nonumber,
\end{eqnarray}
where $\lambda_0$ and $\lambda_1$ are the Schmidt coefficients from the decomposition of $|\phi\rangle_{AB}$. This expression shows that the $\mathcal{C}^{t}_q $-concurrence depends on the Schmidt coefficients $\lambda_0$ and $\lambda_1$, as well as the value of \( q \), which parameterizes the degree of the $\mathcal{C}^{t}_q$-concurrence. The factor $\mu$ ensures the normalization of the concurrence.



Furthermore, the concurrence associated with the pure state $|\phi\rangle_{AB}$ is expressed as
$C(|\phi\rangle_{AB})=\sqrt{2(1-{\rm{tr}}(\varrho_A)^2)}=2\sqrt{\lambda_0\lambda_1}$,
where $\varrho_A = \text{tr}_B(|\phi\rangle_{AB} \langle \phi|)$, and $\lambda_0, \lambda_1$ correspond to the Schmidt coefficients. A one-to-one correspondence exists between the Schmidt coefficients $\lambda_0, \lambda_1$ and the concurrence $C(|\phi\rangle_{AB})$. The Schmidt coefficients can be expressed as
$\lambda_{0,1}=\frac{1\pm\sqrt{1-C^2(|\phi\rangle_{AB})}}{2}$.

It can be demonstrated that the $\mathcal{C}^{t}_q$-concurrence for any pure state $|\phi\rangle_{AB}$ satisfies the following relationship
\small{\begin{eqnarray}
\mathcal{C}^{t}_q(|\phi\rangle_{AB})=h_q(C(|\phi\rangle_{AB})),
\label{relation0}
\end{eqnarray}}
where $h_q(x)$ is an analytic function given by
\small{
\begin{eqnarray}
h_q(x):=\frac{1}{1-2^{1-q}}\left[ 1-\left(\frac{1+\sqrt{1-x^2}}{2}\right)^q-\left(\frac{1-\sqrt{1-x^2}}{2}\right)^q\right],
\label{hq(x)}
\end{eqnarray}}
with $0 \leq x \leq 1$. It has been established that the function $h_q(x)$ is monotonically increasing and convex for $ x\in (0, 1)$ when $2\leq q\leq 4$  \cite{Kim062328}. This property implies that for any mixed state $\rho_{AB} $ in the form ${\cal C}^2 \otimes {\cal C}^d $, the $\mathcal{C}^{t}_q$-concurrence  exhibits the same functional dependence as in  (\ref{relation0}).

\begin{theorem}\label{thmhqx}
For any qubit-qudit state $\rho_{AB}$, and for any $2\leq q\leq 4$, the $\mathcal{C}^{t}_q$-concurrence is related to the concurrence $C(\rho_{AB})$ by the following relationship
\begin{equation}
\mathcal{C}^{t}_q(\rho_{AB})=h_q(C(\rho_{AB})),
\label{thmhqx1}
\end{equation}
where $C(\rho_{AB})$ denotes the concurrence of the state $\rho_{AB}$, and $h_q(x)$ is the function given by (\ref{hq(x)}). This relation allows us to compute the $\mathcal{C}^{t}_q$-concurrence based on the standard concurrence for any qubit-qudit state.
\end{theorem}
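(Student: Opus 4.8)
The plan is to build the mixed-state identity on top of the pure-state relation (\ref{relation0}), $\mathcal{C}^{t}_q(|\phi\rangle_{AB})=h_q(C(|\phi\rangle_{AB}))$, exploiting the two analytic facts about $h_q$ recalled above: on $[0,1]$ and for $2\leq q\leq 4$ it is monotonically increasing and convex. Since both $\mathcal{C}^{t}_q$ and $C$ are defined on mixed states by a convex-roof extension over pure-state decompositions of $\rho_{AB}$, the proof reduces to comparing the two infima, and I would establish (\ref{thmhqx1}) by proving matching lower and upper bounds separately.

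First I would prove $\mathcal{C}^{t}_q(\rho_{AB})\geq h_q(C(\rho_{AB}))$. Let $\{p_i,|\phi_i\rangle\}$ be an optimal decomposition attaining the convex roof for $\mathcal{C}^{t}_q(\rho_{AB})$. Applying (\ref{relation0}) termwise and then Jensen's inequality for the convex function $h_q$,
\begin{equation}
\mathcal{C}^{t}_q(\rho_{AB})=\sum_i p_i h_q\left(C(|\phi_i\rangle)\right)\geq h_q\left(\sum_i p_i C(|\phi_i\rangle)\right).
\end{equation}
Because $\sum_i p_i C(|\phi_i\rangle)$ is the concurrence functional evaluated on one admissible decomposition, it is at least the infimum $C(\rho_{AB})$; monotonicity of $h_q$ then gives $h_q\left(\sum_i p_i C(|\phi_i\rangle)\right)\geq h_q(C(\rho_{AB}))$, which is the desired lower bound.

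For the reverse inequality $\mathcal{C}^{t}_q(\rho_{AB})\leq h_q(C(\rho_{AB}))$, I would invoke the Wootters-type structure of the concurrence in $\mathcal{C}^2\otimes\mathcal{C}^d$: there exists an optimal decomposition $\{p_i,|\phi_i\rangle\}$ realizing $C(\rho_{AB})$ in which every pure member carries the same concurrence, $C(|\phi_i\rangle)=C(\rho_{AB})$ for all $i$. Substituting this decomposition into the defining infimum of $\mathcal{C}^{t}_q$ and using (\ref{relation0}) again,
\begin{equation}
\mathcal{C}^{t}_q(\rho_{AB})\leq\sum_i p_i h_q\left(C(|\phi_i\rangle)\right)=\sum_i p_i h_q(C(\rho_{AB}))=h_q(C(\rho_{AB})).
\end{equation}
The two bounds together yield (\ref{thmhqx1}).

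The main obstacle is the existence of the equal-concurrence optimal decomposition needed for the upper bound: convexity of $h_q$ supplies only a lower bound through Jensen, so the matching upper bound genuinely demands a decomposition on which the concurrence is constant across all pure members. I would secure this from the explicit optimal-decomposition construction for the concurrence in qubit-qudit systems, where the rank-two structure of $\rho_A$ makes a constant-concurrence decomposition available, and I would separately verify the boundary exponents $q=2$ and $q=4$, at which the convexity and monotonicity of $h_q$ are only marginal, to confirm that both inequalities persist.
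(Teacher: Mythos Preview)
Your proposal is correct and follows essentially the same two-inequality strategy as the paper: use an optimal decomposition for $\mathcal{C}^{t}_q$ together with convexity and monotonicity of $h_q$ to get $\mathcal{C}^{t}_q(\rho_{AB})\geq h_q(C(\rho_{AB}))$, and use an optimal decomposition for $C$ to get the reverse inequality. The only noteworthy difference is expository: in the paper's upper-bound step the passage $h_q\!\left(\sum_k p_k C(|\nu_k\rangle)\right)=\sum_k p_k h_q(C(|\nu_k\rangle))$ is written as an equality without comment, which is valid precisely because an equal-concurrence optimal decomposition exists in $\mathcal{C}^2\otimes\mathcal{C}^d$; you make this dependence explicit, which is cleaner and pinpoints exactly where the qubit-qudit hypothesis enters.
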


$\mathit{Proof}$. For any given mixed state $\rho_{AB}$ of the form ${\cal C}^2\otimes {\cal C}^d$, we can express it as an optimal pure state decomposition $\rho_{AB}=\sum_kp_k|\nu_k\rangle_{AB}\langle\nu_k|$, where $p_k$ are the probabilities of the pure states $|\nu_k\rangle_{AB}$, and these pure states form the optimal decomposition of$\rho_{AB}$. In this case, the concurrence of the mixed state $C(\rho_{AB})$ is related to the concurrences of the pure states $ |\nu_k\rangle_{AB}$ as follows
$C(\rho_{AB})=\sum_kp_kC(|\nu_k\rangle_{AB})$. We can get
\begin{eqnarray}
h_q(C(\rho_{AB}))
&=&h_q\left(\sum_kp_kC(|\nu_{k}\rangle_{AB})\right)
\nonumber\\
&=&\sum_kp_kh_q(C(|\nu_{k}\rangle_{AB}))\nonumber\\
&=&\sum_kp_k\mathcal{C}^{t}_q(|\nu_{k}\rangle_{AB})
\label{eqnEC03}\\
&\geq&\mathcal{C}^{t}_q(\rho_{AB}),
\label{eqnEC2}
\end{eqnarray}
where the equality (\ref{eqnEC03}) is derived  using the relation established in (\ref{relation0}),  the inequality (\ref{eqnEC2}) follows directly from the definition of the $\mathcal{C}^{t}_q(\rho_{AB})$-concurrence, as given in (\ref{Cq2}).

Let the mixed state $\rho_{AB}$ be expressed as the optimal pure state decomposition, given by $\rho_{AB}=\sum_lp_l|u_l\rangle_{AB}\langle u_l|$, where $ p_l$ represents the probability associated with each pure state $|u_l\rangle_{AB} $. This decomposition is used to compute the $\mathcal{C}^{t}_q$-concurrence of $\rho_{AB}$. So we get
\begin{eqnarray}
\mathcal{C}^{t}_q(\rho_{AB})
&=&\sum_lp_l\mathcal{C}^{t}_q(|u_l\rangle_{AB})\nonumber\\
&=&\sum_lp_lh_q(C(|u_l\rangle_{AB}))\label{eqnEC3}\\
&\geq& h_q\left(\sum_lp_lC(|u_l\rangle_{AB})\right)\label{eqnEC4}\\
&\geq& h_q(C(\rho_{AB})),\label{eqnEC5}
\end{eqnarray}
where the equality in (\ref{eqnEC3}) follows directly from the relation (\ref{relation0}), while the inequality in (\ref{eqnEC4}) arises from the convexity property of the $h_q(x)$. Additionally, the inequality in (\ref{eqnEC5}) is a consequence of the definition of the concurrence in (\ref{concur}).
By combining (\ref{eqnEC2}) with (\ref{eqnEC5}), we have  completed the proof.$\hfill\qedsymbol$

It has been demonstrated that the squared concurrence follows a monogamy relation for any $n$-qubit state $\rho_{A_1A_2\cdots A_k}$ \cite{Osborne220503}, which is expressed as:
\begin{eqnarray}
C^2(\rho_{A_1|A_2\cdots A_k}) \geq \sum_{i=2}^k C^2(\rho_{A_1|A_i}),
\label{squaredconcurrence}
\end{eqnarray}
where $C(\rho_{A_1|A_2\cdots A_k})$ represents the bipartite entanglement for the bipartition between subsystem $A_1$ and the remaining subsystems $A_2 \cdots A_k$. On the other hand, $C(\rho_{A_1|A_i})$ denotes the bipartite entanglement corresponding to the reduced density operator $\rho_{A_1|A_i} = \mathrm{tr}_{A_1 \cdots A_{i-1} A_{i+1} \cdots A_k}(\rho_{A_1|A_2 \cdots A_k})$ of the joint subsystems $A_1$ and $A_i$ for $i = 2, \dots, k$.
For the entanglement measure $\mathcal{C}^{t}_q$, the following result holds.

\begin{theorem}\label{thmhqx2}
For a multi-qubit state $\rho_{A_1 \cdots A_k}$, and for any value of $2 \leq q \leq 3$, the following inequality holds
\begin{equation}
\mathcal{C}^{t}_q\left( \rho_{A_1|A_2\cdots A_k}\right) \geq \sum_{i=2}^k \mathcal{C}^{t}_q(\rho_{A_1 A_i}), \label{Tmono}
\end{equation}
where $\mathcal{C}^{t}_q\left( \rho_{A_1|A_2\cdots A_k}\right)$ represents the $\mathcal{C}^{t}_q$-concurrence of the state $\rho_{A_1A_2\cdots A_k}$ with respect to the bipartition between subsystem $A_1$ and the rest of the system, $A_2\cdots A_k$. On the other hand, $\mathcal{C}^{t}_q(\rho_{A_1 A_i})$ denotes the $\mathcal{C}^{t}_q$-concurrence of the reduced density matrix $\rho_{A_1 A_i}$, which is obtained by tracing out all subsystems except for $A_1$ and $A_i$, where $i$ ranges from $2$ to $k$.
\end{theorem}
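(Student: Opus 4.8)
The plan is to reduce the $\mathcal{C}^{t}_q$ monogamy inequality to the already-established monogamy of the squared concurrence in (\ref{squaredconcurrence}), using the functional identity $\mathcal{C}^{t}_q(\rho_{AB})=h_q(C(\rho_{AB}))$ from Theorem \ref{thmhqx}. Since $A_1$ is a qubit, the bipartition $A_1|A_2\cdots A_k$ is a qubit versus a $2^{k-1}$-dimensional system, and each reduction $\rho_{A_1 A_i}$ is a two-qubit state; hence all the relevant states are of qubit-qudit type and Theorem \ref{thmhqx} applies to every term. This gives $\mathcal{C}^{t}_q(\rho_{A_1|A_2\cdots A_k})=h_q(C(\rho_{A_1|A_2\cdots A_k}))$ and $\mathcal{C}^{t}_q(\rho_{A_1 A_i})=h_q(C(\rho_{A_1 A_i}))$, so it suffices to control $h_q$ as a function of the squared concurrence.

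First I would introduce the auxiliary function $f(x):=h_q(\sqrt{x})$ on $[0,1]$, which writes $\mathcal{C}^{t}_q$ directly in terms of $x=C^2$. Three properties drive the argument: (i) $f(0)=h_q(0)=0$; (ii) $f$ is monotonically increasing, being the composition of the increasing map $h_q$ (increasing for $2\le q\le 4$ by \cite{Kim062328}) with $x\mapsto\sqrt{x}$; and (iii) $f$ is convex on $[0,1]$ for $2\le q\le 3$. Granting (iii), convexity together with $f(0)=0$ yields superadditivity: for nonnegative $x_i$ with $S=\sum_i x_i>0$, convexity gives $f(x_i)\le (x_i/S)\,f(S)$, and summing over $i$ produces $\sum_i f(x_i)\le f\left(\sum_i x_i\right)$.

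Assembling these ingredients gives the desired chain,
\begin{align}
\mathcal{C}^{t}_q(\rho_{A_1|A_2\cdots A_k})
&= f\left(C^2(\rho_{A_1|A_2\cdots A_k})\right)\nonumber\\
&\ge f\left(\sum_{i=2}^k C^2(\rho_{A_1 A_i})\right)\nonumber\\
&\ge \sum_{i=2}^k f\left(C^2(\rho_{A_1 A_i})\right)\nonumber\\
&= \sum_{i=2}^k \mathcal{C}^{t}_q(\rho_{A_1 A_i}),\nonumber
\end{align}
where the first inequality uses the monotonicity of $f$ together with the squared-concurrence monogamy (\ref{squaredconcurrence}), and the second is the superadditivity just established.

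The main obstacle is property (iii), and it is exactly here that the restriction $2\le q\le 3$ is forced. Writing the reduced eigenvalues as $a=(1+\sqrt{1-x})/2$ and $b=(1-\sqrt{1-x})/2$ with $a+b=1$ and $a\ge b$, a direct computation of $f''$ (via the chain rule in $u=\sqrt{1-x}$) reduces the sign condition $f''\ge0$ to the power inequality $(3-q)(a^{q-1}-b^{q-1})\ge (q-1)(a b^{q-2}-a^{q-2}b)$. I would check that both sides coincide at the endpoints $q=2$ and $q=3$---where $f$ degenerates to a linear function, consistent with $h_2(C)=C^2$ and $h_3(C)=2C^2$---and that the inequality holds strictly in the interior $2<q<3$; for $q>3$ the balance of signs reverses and $f$ becomes concave, so the method (and the stated range) genuinely stops at $q=3$. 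Establishing this elementary but delicate inequality, for instance by factoring out $(a-b)$ and comparing the remaining monotone factors or by analyzing the $q$-dependence at fixed $a$, is the only nontrivial analytic step; the remainder is bookkeeping.
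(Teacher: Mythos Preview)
Your approach is correct and shares the same core with the paper's: both reduce the $\mathcal{C}^{t}_q$ monogamy to the squared-concurrence monogamy (\ref{squaredconcurrence}) via the identity $\mathcal{C}^{t}_q=h_q(C)$ and the superadditivity of $f(x)=h_q(\sqrt{x})$ on $[0,1]$ for $2\le q\le 3$. The paper quotes this superadditivity as the Kim inequality $h_q(\sqrt{\alpha^2+\beta^2})\ge h_q(\alpha)+h_q(\beta)$ from \cite{Kim062328} and applies it iteratively, whereas you obtain it in one stroke from convexity of $f$ together with $f(0)=0$; these are equivalent packagings of the same analytic fact. The organizational difference is that the paper first proves the inequality for pure $|\psi\rangle_{A_1|A_2\cdots A_k}$ (using the pure-state relation (\ref{relation0})) and then pushes it to mixed states through an optimal decomposition and the convexity (E4) of $\mathcal{C}^{t}_q$, while you bypass this split entirely by invoking Theorem~\ref{thmhqx} directly on the mixed bipartition $A_1|A_2\cdots A_k$---legitimate since $A_1$ is a qubit---which is a genuine streamlining. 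One small slip: at the endpoints you write $h_3(C)=2C^2$, but in fact $h_3(C)=C^2$ (the normalization $1/(1-2^{1-q})$ already absorbs the factor), consistent with the paper's relation $\mathcal{C}^{t}_2=\mathcal{C}^{t}_3=C^2$; your conclusion that $f$ is linear at $q=2,3$ remains correct.
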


\begin{proof}
For $q = 2$ or $3$,  (\ref{thmhqx1}) leads to the relation
\begin{equation}
\mathcal{C}^{t}_2\left(\rho_{AB}\right)=C^2(\rho_{AB})=\mathcal{C}^{t}_3\left(\rho_{AB}\right)
\label{23}
\end{equation}
which holds for any  mixed state or pure state $\rho_{AB}$ of the form ${\cal C}^2 \otimes {\cal C}^d$. Consequently, the monogamy inequality given by (\ref{Tmono}) directly follows from both (\ref{squaredconcurrence}) and (\ref{23}).

For $2<q<3$, we begin by proving the theorem for the case of an $k$-qubit pure state $\ket{\psi}_{A_1\cdots A_k}$. To begin, observe that  (\ref{squaredconcurrence}) is equivalent to the following inequality for any $k$-qubit pure state $\ket{\psi}_{A_1\cdots A_k}$,
\begin{equation}
{C}(\rho_{A_1 |A_2 \cdots A_k}) \geq  \sqrt{{C}^2(\rho_{A_1
A_2}) +\cdots+{C}^2(\rho_{A_1 A_k})}
\label{nCmonoroot}
\end{equation}
where $ C(\rho_{A_1 | A_2 \cdots A_k})$ denotes the concurrence of the state under the bipartition $A_1$ and $A_2 \cdots A_k$, and $C(\rho_{A_1 A_i})$ is the concurrence of the reduced density matrix $\rho_{A_1 A_i}$ for $ i = 2, \cdots, k$.

It has been shown that for any $2\leq q\leq 3$ \cite{Kim062328},
\begin{align}
h_q\left(\sqrt{\alpha^2+\beta^2}\right) \geq h_q(\alpha)+h_q(\beta),\label{Lemmahqx1}
\end{align}
with equality holding for $q = 2$ or $q = 3$, where $0 \leq \alpha, \beta \leq 1$ and $0 \leq \alpha^2+\beta^2 \leq 1$. Thus, it follows from (\ref{Lemmahqx1}) and (\ref{nCmonoroot}) that
\begin{align}
\mathcal{C}^{t}_q\left(\ket{\psi}_{A_1|A_2\cdots A_k} \right)=&
h_{q}\left({C}(\rho_{A_1 |A_2 \cdots A_k})\right)\nonumber\\
\geq&h_{q}\left(\sqrt{{C}^2(\rho_{A_1A_2}) +\cdots+{C}^2(\rho_{A_1 A_k})}\right)\nonumber\\
\geq& h_{q}\left({C}(\rho_{A_1A_2})\right)\nonumber\\
&+h_{q}\left(\sqrt{{C}^2(\rho_{A_1 A_3})+\cdots+{C}^2(\rho_{A_1 A_k})}\right)\nonumber\\
&~~~~~~~\vdots\nonumber\\
\geq& h_{q}\left({C}(\rho_{A_1 A_2})\right)+\cdots+h_{q}\left({C}(\rho_{A_1 A_k})\right)\nonumber\\
=& \mathcal{C}^{t}_q\left(\rho_{A_1A_2}\right)+\cdots +\mathcal{C}^{t}_q\left(\rho_{A_1A_k}\right), \label{monoineq}
\end{align}
where the first equality arises from the functional relationship between concurrence and the $\mathcal{C}^{t}_q$-concurrence for ${\cal C}^2\otimes {\cal C}^d$ pure states. The initial inequality follows from the monotonicity property of $h_{q}(x)$, while the subsequent inequalities are obtained through iterative application of (\ref{Lemmahqx1}). The final equality is established by Theorem~\ref{thmhqx}.


For an $k$-qubit mixed state $\rho_{A_1A_2\cdots A_k}$, consider the optimal pure state decomposition given by $\rho_{A_1|A_2\cdots A_k}=\sum_j p_j \ket{\psi_j}_{A_1|A_2\cdots A_k}\bra{\psi_j}$,
where the $\mathcal{C}^{t}_q\left(\rho_{A_1|A_2\cdots A_k}\right)$ can be expressed as a sum of the corresponding $\mathcal{C}^{t}_q\left(\ket{\psi_j}_{A_1|A_2\cdots A_k}\right)$ for each pure state, i.e.,
$\mathcal{C}^{t}_q\left(\rho_{A_1|A_2\cdots A_k}\right)=\sum_j p_j \mathcal{C}^{t}_q\left(\ket{\psi_j}_{A_1|A_2\cdots A_k}\right)$.
Each state $\ket{\psi_j}_{A_1|A_2\cdots A_k}$ in this decomposition is an $k$-qubit pure state, which allows us to further analyze the properties of the $\mathcal{C}^{t}_q$-concurrence,
\begin{align}
&\mathcal{C}^{t}_q\left(\rho_{A_1|A_2\cdots A_k}\right)=\sum_j p_j
\mathcal{C}^{t}_q\left(\ket{\psi_j}_{A_1|A_2\cdots A_k}\right)\nonumber\\
\geq&\sum_j p_j\left(\mathcal{C}^{t}_q\left(\rho^j_{A_1A_2}\right)+\cdots +\mathcal{C}^{t}_q\left(\rho^j_{A_1A_k}\right) \right)\nonumber\\
=&\sum_j p_j\mathcal{C}^{t}_q\left(\rho^j_{A_1A_2}\right)+\cdots
+\sum_j p_j\mathcal{C}^{t}_q\left(\rho^j_{A_1A_k}\right) \nonumber\\
\geq&\mathcal{C}^{t}_q\left(\rho_{A_1A_2}\right)+\cdots +\mathcal{C}^{t}_q\left(\rho_{A_1A_k}\right), \label{Tmonomixed}
\end{align}
where $\rho^j_{A_1A_j}$ represents the reduced density matrix of the state $\ket{\psi_j}_{A_1|A_2\cdots A_k}$ corresponding to the subsystem $A_1A_j$, where $j$ ranges from $2$ to $k$. The final inequality follows from the definition of the $\mathcal{C}^{t}_q$-concurrence applied to each reduced density matrix $\rho_{A_1A_j}$.
\end{proof}

Next, we provide an example to demonstrate that the $\mathcal{C}^{t}_q$-concurrence satisfies to the monogamy property.

\noindent\emph{\textbf{Example 2}}.
Let us now examine the 3-qubit state $|\phi\rangle_{ABC}$, which is expressed using the generalized Schmidt decomposition from \cite{Acin1560,Gao71} as follows
\begin{equation}
|\phi\rangle_{ABC}=\nu_0|000\rangle+\nu_1e^{i\varphi}|100\rangle+\nu_2|101\rangle+\nu_3|110\rangle+\nu_4|111\rangle,
\end{equation}
where $\nu_i\geq0$ for $i=0,1,\cdots,4$, and the normalization condition  $\sum\limits_{i=0}^{4}\nu_i^2=1$ holds. In this setup, the various concurrences are given by
$C_{A|BC} = 2\nu_0 \sqrt{\nu_2^2 + \nu_3^2 + \nu_4^2}, \quad C_{AB} = 2\nu_0 \nu_2, \quad C_{AC} = 2\nu_0 \nu_3.$
By choosing specific values for the parameters $\nu_0=\nu_3=\nu_4=\sqrt{\frac{2}{7}}$, $\nu_2=\sqrt{\frac{1}{7}}$, and $C_{AC}=\frac{4}{7}$, we obtain the following concurrences
$$
C_{A|BC} = \frac{2\sqrt{10}}{7}, \quad C_{AB} = \frac{2\sqrt{2}}{7}, \quad C_{AC} = \frac{4}{7}.
$$

For any $2\leq q\leq 3$ and $1\leq\alpha\leq4$, we have
\begin{widetext}
\begin{equation}
K_1=({\mathcal{C}^{t}_{q(A|BC)}})^\alpha=\left(\frac{1}{1-2^{1-q}}
\left(1-\left(\frac{1-\sqrt{1-(\frac{2\sqrt{10}}{7})^2}}{2} \right)^q-\left(\frac{1+\sqrt{1-(\frac{2\sqrt{10}}{7})^2}}{2} \right)^q\right)\right)^\alpha,\label{Example21}
\end{equation}
\begin{eqnarray}
K_2=({\mathcal{C}^{t}_{q(AB)}})^\alpha+({\mathcal{C}^{t}_{q(AC)}})^\alpha
=\left(\frac{1}{1-2^{1-q}}\left(1-\left(\frac{1-\sqrt{1-(\frac{2\sqrt{2}}{7})^2}}{2} \right)^q-\left(\frac{1+\sqrt{1-(\frac{2\sqrt{2}}{7})^2}}{2} \right)^q\right)\right)^\alpha\nonumber\\
+\left(\frac{1}{1-2^{1-q}}\left(1-\left(\frac{1-\sqrt{1-(\frac{4}{7})^2}}{2} \right)^q-\left(\frac{1+\sqrt{1-(\frac{4}{7})^2}}{2} \right)^q\right)\right)^\alpha.\label{Example22}
\end{eqnarray}
\end{widetext}
It is seen that $\mathcal{C}^{t}_q$-concurrence satisfies the monogamy relation for $2\leq q\leq 3$ and $1\leq\alpha\leq4$, see FIG. \ref{monoa}. FIG. \ref{monob}shows the case of $\alpha=2.5$.  FIG. \ref{monoc} shows the difference $K=K_1-K_2$  of the $\mathcal{C}^{t}_q$-concurrence  between  (\ref{Example21}) and (\ref{Example22}).

\begin{figure}[htbp]
    \begin{minipage}[t]{0.9\linewidth}
        \centering
        \includegraphics[width=\textwidth]{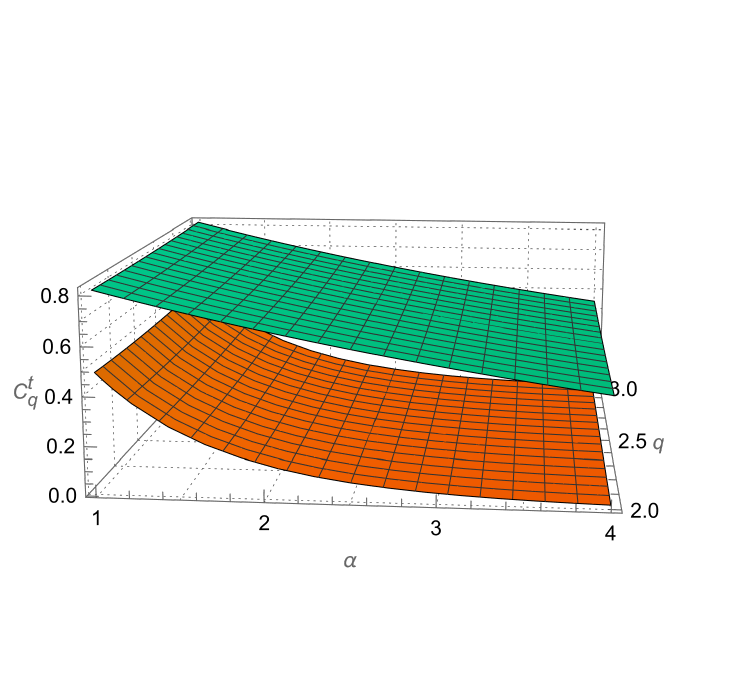}
       \caption{The green surface represents $({\mathcal{C}^{t}_{q(A|BC)}})^\alpha$, the orange surface below the green one represents $({\mathcal{C}^{t}_{q(AB)}})^\alpha+({\mathcal{C}^{t}_{q(AC)}})^\alpha$. }
    \label{monoa}
    \end{minipage}%
    \hspace{.20in}
    \begin{minipage}[t]{0.9\linewidth}
        \centering
        \includegraphics[width=\textwidth]{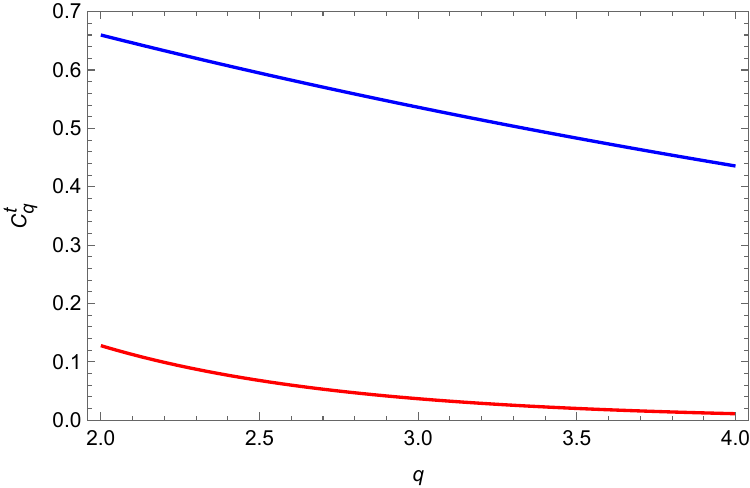}
        \caption{ The blue line represents $({\mathcal{C}^{t}_{2.5(A|BC)}})^\alpha$  and the red line represents $({\mathcal{C}^{t}_{2.5(AB)}})^\alpha+({\mathcal{C}^{t}_{2.5(AC)}})^\alpha$.}
    \label{monob}
    \end{minipage}%
  \hspace{.20in}
    \begin{minipage}[t]{0.9\linewidth}
        \centering
        \includegraphics[width=\textwidth]{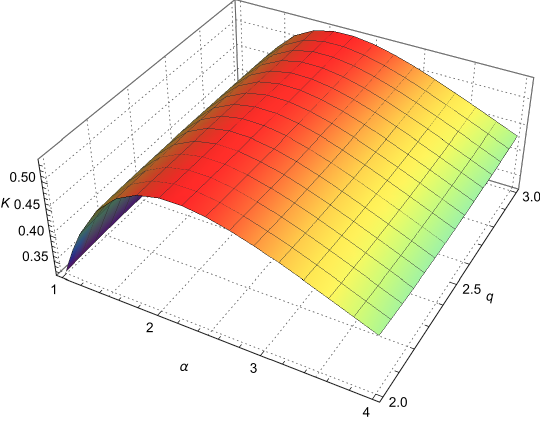}
        \caption{ The difference $K=K_1-K_2$ of the $\mathcal{C}^{t}_q$-concurrence between (\ref{Example21}) and (\ref{Example22}).}
    \end{minipage}
\label{monoc}
\end{figure}


We have explored the monogamy characteristics of the $\mathcal{C}^{t}_q$-concurrence. In higher-dimensional multipartite quantum systems, the monogamy behavior of $\mathcal{C}^{t}_q$-concurrence deviates significantly from that of standard concurrence, highlighting fundamental differences in their entanglement distribution properties. Consider a chain-network configuration composed of two EPR pairs, as introduced in \cite{Yang062402}. The corresponding quantum state in the $4 \otimes 2 \otimes 2$ system is given by
\begin{eqnarray}
|\Psi\rangle_{ABC}=\frac{1}{\sqrt{2}}(\alpha|000\rangle+\beta|110\rangle
+\alpha|201\rangle+\beta|311\rangle),\label{chain001}
\end{eqnarray}
where the parameters are defined as $\alpha = \cos\theta$ and $\beta = \sin\theta$. This state represents an entanglement structure distributed across three parties, forming a fundamental example of multipartite quantum networks. The reduced density matrix corresponding to subsystem $A$  is expressed as
\begin{eqnarray}
\varrho_A&=&\frac{\alpha^2}{2}|0\rangle\langle0|
+\frac{\beta^2}{2}|1\rangle\langle1|+\frac{\alpha^2}{2}|2\rangle\langle2|
+\frac{\beta^2}{2}|3\rangle\langle3|.\nonumber
\end{eqnarray}
From (\ref{Cq1}) we get $$\mathcal{C}^{t}_q(|\Psi\rangle_{A|BC})=\frac{4-2^{1-q} \left(\alpha ^{2 q}+\left(2-\alpha ^2\right)^q+\beta ^{2 q}+\left(2-\beta ^2\right)^q\right)}{4-4^{1-q} \left(3^q+1\right)}.$$
The reduced density matrix of the bipartite system \( AB \) is given by $\rho_{AB} = \frac{1}{2} \left( |\psi_1\rangle \langle \psi_1| + |\psi_2\rangle \langle \psi_2| \right),$
where the basis states are defined as $|\psi_1\rangle = \alpha |00\rangle + \beta |11\rangle, \quad |\psi_2\rangle = \alpha |20\rangle + \beta |31\rangle.$ Any pure state \( |\mu_i\rangle \) in an optimal pure state decomposition of \( \rho_{AB} \) can be expressed as a superposition of these basis states:
$|\mu_i\rangle = a_i |\psi_1\rangle + e^{-i\sigma} \sqrt{1-a_i^2} |\psi_2\rangle.$
The corresponding reduced density matrix of subsystem \( B \) is then given by
$\varrho^i_B = \alpha^2 |0\rangle \langle 0| + \beta^2 |1\rangle \langle 1|.$
By applying the definition of the \( \mathcal{C}^{t}_q \)-concurrence from (\ref{Cq1}), we obtain
$$\mathcal{C}^{t}_q(\rho_{AB}) = \frac{1}{1-2^{1-q}} \left( 1 - \alpha^{2q} - \beta^{2q} \right).$$
This result quantifies the entanglement between subsystems \( A \) and \( B \) in terms of the parameterized $\mathcal{C}^{t}_q$-concurrence.

Similarly for the reduced state $\rho_{AC}$, we have $\varrho^i_{C}=\frac{1}{2}|0\rangle\langle0|+\frac{1}{2}|1\rangle\langle1|$. This implies that
$\mathcal{C}^{t}_q(\rho_{AC})=1$.

Therefore, the ``residual entanglement" associated with the $\mathcal{C}^{t}_q$-concurrence can be expressed as
$$
 \tau_\gamma^{\mathcal{C}^{t}_q}(|\Psi\rangle_{A|BC})
 =(\mathcal{C}^{t}_q(|\Psi\rangle_{A|BC}))^\gamma-(\mathcal{C}^{t}_q(\rho_{AB}))^\gamma
 -(\mathcal{C}^{t}_q(\rho_{AC}))^\gamma.
$$
For the  concurrence, the following expressions hold
\begin{eqnarray}
&C(|\Psi\rangle_{A|BC}) = \sqrt{2 - \beta^4 - \alpha^4},\quad\\\nonumber
&C(\rho_{AB}) = \sqrt{2 - 2\beta^4 - 2\alpha^4},\quad\\\nonumber
&C(\rho_{AC}) = 1.\nonumber
\end{eqnarray}

These results characterize the entanglement distribution in the multipartite system, providing a comparison between the concurrence and the $\mathcal{C}^{t}_q$-concurrence in terms of their respective monogamy properties.

Therefore, the ``residual entanglement" of the concurrence can be expressed as
$$
\tau_\gamma^{C}(|\Psi\rangle_{A|BC})=C^\gamma(|\Psi\rangle_{A|BC})-C^\gamma(\rho_{AB})
-C^\gamma(\rho_{AC}).
$$

From FIG.\ref{Fig2} and FIG.\ref{Fig3}, it can be seen that for $\gamma\in(0,5)$, the residual entanglement of $\tau^{\mathcal{C}^{t}_q}$ and $\tau^{C}$ exhibit significantly different behaviors.
\begin{figure}[htbp]
\centering
\begin{minipage}[t]{0.9\linewidth}
\centering
\includegraphics[width=\textwidth]{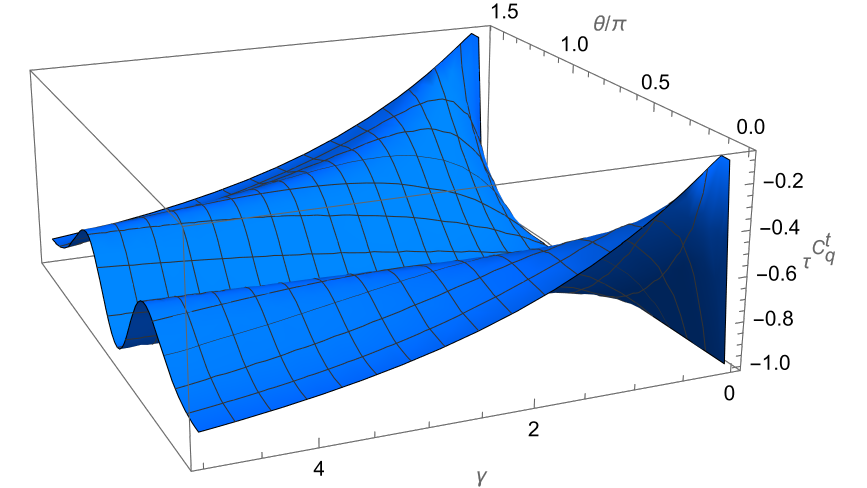}
\caption{For \( \gamma \in (0,5) \), the residual entanglement of the  $\mathcal{C}^{t}_q$-concurrence ($q=4$), is quantified by the $\tau^{\mathcal{C}^{t}_q}$.}
\label{Fig2}
\end{minipage}
\begin{minipage}[t]{0.9\linewidth}
\centering
\includegraphics[width=\textwidth]{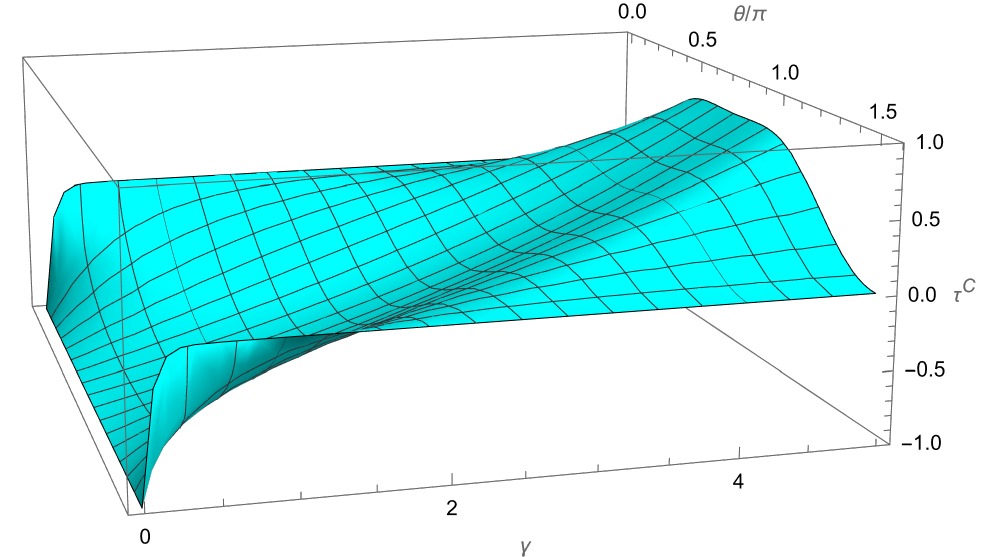}
\caption{For \( \gamma \in (0,5) \), the residual entanglement of the  concurrence, is quantified by the $\tau^{C}$.}
\label{Fig3}
\end{minipage}
\end{figure}

\section{Conclusion}\label{sect6}


Building upon the total concurrence formulated from the $q$-concurrence $(q\geq2)$ and its complementary dual, we introduced a new bipartite entanglement measure, $\mathcal{C}^{t}_q$ $(q\geq2)$. Furthermore, we established tight lower bounds for the $\mathcal{C}^{t}_q$-concurrence, valid for $q\geq 2$ when  $d\geq 3$ and for $q\geq s=3.34$ in the case of $d=2$. Additionally, we provided analytical results for the $\mathcal{C}^{t}_q$-concurrence in both isotropic and Werner states, demonstrating its applicability in characterizing entanglement in these widely studied quantum states.

We have further examined the monogamy properties of the $\mathcal{C}^{t}_q$-concurrence in both qubit and higher-dimensional systems. Compared to usual concurrence, the $\mathcal{C}^{t}_q$-concurrence reveals novel features that enable a more refined characterization of quantum entanglement. Additionally, inspired by the parameterized $\alpha$-concurrence and its complementary dual, we introduced the $\mathcal{C}^{t}_\alpha$-concurrence for $(0 \leq \alpha \leq \frac{1}{2})$. These findings open up new avenues for exploring the role of quantum entanglement in quantum information processing. Similar to the concurrence of assistance, it would be valuable to define the assisted versions of $\mathcal{C}^{t}_q$-concurrence and $\mathcal{C}^{t}_\alpha$-concurrence and investigate their polygamy properties.

\bigskip

\begin{appendix}
\section{Entanglement measure derived from the total concurrence of parameterized \( \alpha \)-concurrence and its complementary dual}\label{a}

In \cite{WEI275303} a bipartite entanglement measure called $\alpha-$concurrence $\left(0 \leq \alpha \leq \frac{1}{2}\right)$ has been presented,
\begin{eqnarray}
\mathcal{C}_\alpha\left(|\varphi\rangle_{A B}\right)=\operatorname{tr}\left(\varrho_A^\alpha\right)-1,\label{c2}
\end{eqnarray}
here, $\varrho_A={\rm tr}_B(|\varphi\rangle_{AB}\langle\varphi|)$.

Similarly, we introduce the total concurrence associated with the parameterized $\alpha$-concurrence $(0 \leq \alpha \leq \frac{1}{2})$ and its complementary dual for a quantum state $\rho$  in a $d$-dimensional Hilbert space ${\cal H}$, defined as
\begin{eqnarray}
\mathcal{C}^{t}_\alpha(\rho)=\mathrm{tr}\rho^\alpha-1+ \mathrm{tr}(I-\rho)^\alpha -\mathrm{tr}(\textmd{I}-\rho),\nonumber
\end{eqnarray}
which aligns with the general form given in (\ref{general0}), where the function is specified as $f(x)=x^\alpha+(1-x)^\alpha-1$. In this context, the function $f(x)$ is a continuously differentiable and strictly concave real function over the interval $x\in [0,1]$, satisfying the boundary conditions $f(0)=f(1)=0$. The second derivative of $f(x)$ is given by $f''(x)=\alpha(\alpha-1)[(1-x)^{\alpha-2}+x^{\alpha-2}]<0$,
which remains negative for all $0 \leq \alpha \leq \frac{1}{2}$, confirming the concavity of $f(x)$ within the specified range.


For a pure bipartite quantum state $|\Phi\rangle_{AB}$ in the Hilbert space ${\cal H}_A\otimes {\cal H}_{B}$, we define its $\mathcal{C}^{t}_\alpha$-concurrence as
\begin{eqnarray}
\mathcal{C}^{t}_\alpha(|\Phi\rangle_{AB})=\mathrm{tr}{\rho_A}^\alpha-1+ \mathrm{tr}(I-{\rho_A})^\alpha -\mathrm{tr}(\textmd{I}-{\rho_A}),
\end{eqnarray}
where $\rho_A={\rm tr}_B(|\Phi\rangle_{AB}\langle\Phi|)$.

For a bipartite mixed state $\rho_{AB}$, we define its $\mathcal{C}^{t}_\alpha$-concurrence  via convex-roof extension,
\begin{eqnarray}
\mathcal{C}^{t}_\alpha(\rho_{AB})=\inf_{\{p_i,|\Phi_i\rangle\}}
\sum_ip_i\mathcal{C}^{t}_\alpha(|\Phi_i\rangle_{AB}),
\label{Cq222}
\end{eqnarray}
where the infimum is taken over all possible decompositions of $\rho_{AB}=\sum_ip_i|\Phi_i\rangle_{AB}\langle\Phi_i|$ into pure states.

This definition ensures that $\mathcal{C}^{t}_\alpha(\rho_{AB})$ provides a proper quantification of entanglement for general bipartite quantum states.
With regard to the $\mathcal{C}^{t}_q$-concurrence for $(q\geq 2)$, the following corollary holds:

\begin{corollary}\label{theorem5}
For any finite-dimensional quantum state $\rho_{AB}\in {\cal H}_A\otimes {\cal H}_{B}$, the $\mathcal{C}^{t}_\alpha$-concurrence $(0 \leq \alpha \leq \frac{1}{2})$  defined in (\ref{Cq222}) serves as a well-defined parameterized measure of bipartite entanglement.
\end{corollary}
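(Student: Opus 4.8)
The plan is to follow the template of the proof of Theorem~\ref{theoremA} almost verbatim, since the $\mathcal{C}^{t}_\alpha$-concurrence is built from the same generalized-entropy construction~(\ref{general0}) with the single function $f$ replaced. The first step is to record the analogue of Lemma~\ref{LemmaA}: the functional $F^{t}_\alpha(\rho)=\mathrm{tr}\,\rho^\alpha-1+\mathrm{tr}(I-\rho)^\alpha-\mathrm{tr}(I-\rho)$ is concave. This is immediate from the fact, already established above, that $F^{t}_\alpha(\rho)=\mathrm{tr}[f(\rho)]=\sum_i f(\lambda_i)$ with $f(x)=x^\alpha+(1-x)^\alpha-1$ strictly concave on $[0,1]$, $f(0)=f(1)=0$, and $f''(x)=\alpha(\alpha-1)[(1-x)^{\alpha-2}+x^{\alpha-2}]<0$ for $0\leq\alpha\leq\tfrac12$; concavity of $\mathrm{tr}[f(\rho)]$ for concave $f$ then gives concavity of $F^{t}_\alpha$. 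With this in hand, condition (E3) is proved by repeating the chain of inequalities (\ref{slocc1})--(\ref{slocc3}) with $F^{t}_q$ replaced by $F^{t}_\alpha$, the single inequality used there being exactly the concavity just established.

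For (E1), note that $f(x)\geq0$ on $[0,1]$ with equality only at $x\in\{0,1\}$, because a concave function vanishing at both endpoints is nonnegative in between and strictly positive in the interior. For a pure separable state the reduced matrix $\rho_A$ has eigenvalues $\{1,0,\ldots,0\}$, so $F^{t}_\alpha(\rho_A)=\sum_i f(\lambda_i)=0$ and $\mathcal{C}^{t}_\alpha(|\Phi\rangle_{AB})=0$; the convex-roof definition (\ref{Cq222}) extends this to all separable mixed states. Conversely, $F^{t}_\alpha(\rho_A)=\sum_i f(\lambda_i)=0$ forces every $\lambda_i\in\{0,1\}$, which together with $\sum_i\lambda_i=1$ means $\rho_A$ is pure and hence $|\Phi\rangle_{AB}$ is separable; strict positivity for entangled states then follows as in Theorem~\ref{theoremA}. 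Condition (E4) is automatic, being a general feature of convex-roof extensions.

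For (E5) I would verify Schur concavity in the squared Schmidt coefficients. The measure is manifestly invariant under permutations of the $\lambda_i$, and differentiating gives $\partial \mathcal{C}^{t}_\alpha/\partial\lambda_i=\alpha\bigl[\lambda_i^{\alpha-1}-(1-\lambda_i)^{\alpha-1}\bigr]$. Writing $g(\lambda)=\lambda^{\alpha-1}-(1-\lambda)^{\alpha-1}$, one computes $g'(\lambda)=(\alpha-1)\bigl[\lambda^{\alpha-2}+(1-\lambda)^{\alpha-2}\bigr]<0$, so $g$ is strictly decreasing and therefore $(\lambda_i-\lambda_j)\bigl(\partial\mathcal{C}^{t}_\alpha/\partial\lambda_i-\partial\mathcal{C}^{t}_\alpha/\partial\lambda_j\bigr)=\alpha(\lambda_i-\lambda_j)\bigl(g(\lambda_i)-g(\lambda_j)\bigr)\leq0$. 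This yields monotonicity under LOCC for pure states, and the mixed-state case follows from convexity together with definition (\ref{Cq222}), exactly as at the end of the proof of Theorem~\ref{theoremA}.

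The main point requiring care, rather than a genuine obstacle, is the change of sign induced by $\alpha-1<0$: every inequality in the $q\geq2$ argument was driven by the convexity of $x\mapsto x^q$, whereas here the corresponding statements rest on the concavity of $x\mapsto x^\alpha$, so one must confirm that the complementary-dual term $\mathrm{tr}(I-\rho)^\alpha-\mathrm{tr}(I-\rho)$ is combined with the correct sign to keep $f$ concave and nonnegative. A secondary technical point is the divergence of $\lambda^{\alpha-1}$ as $\lambda\to0^+$, which is harmless for the monotonicity inequality (both sides retain their signs) and can be handled by continuity at the boundary of the probability simplex. Since $f''<0$ has already been verified throughout $0\leq\alpha\leq\tfrac12$, no new analytic difficulty beyond this bookkeeping arises.
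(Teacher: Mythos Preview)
Your proposal is correct and follows exactly the approach the paper implicitly relies on: the corollary is stated without proof immediately after the paper verifies that $f(x)=x^\alpha+(1-x)^\alpha-1$ is concave with $f(0)=f(1)=0$, so the intended argument is precisely the line-by-line repetition of Theorem~\ref{theoremA} with $F^{t}_q$ replaced by $F^{t}_\alpha$. Your explicit verification of Schur concavity via the sign of $g'(\lambda)=(\alpha-1)[\lambda^{\alpha-2}+(1-\lambda)^{\alpha-2}]$ is in fact more detailed than anything the paper supplies for this case.
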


\section{ Analytical expression of $\zeta\left(\varrho_\mathcal{F}\right)$ for isotropic states}\label{b}
Let $\mathcal{T}_{iso}$ denote the $\left(U\otimes U^*\right)$-twirling operator, which is defined as
$\mathcal{T}_{iso}\left(\varrho\right)=\int dU\left(U\otimes U^{\ast}\right)\varrho\left(U\otimes U^{\ast}\right)^{\dagger}$, where $dU$  represents the standard Haar measure over the group of all $d\times d$ unitary operators. Consequently, we have $\mathcal{T}_{iso}\left(\varrho\right)=\varrho_{\mathcal{F}\left(\varrho\right)}$ , where $\mathcal{F}\left(\varrho\right)=\braket{\Psi|\varrho|\Psi}$. It is also known that $\mathcal{T}_{iso}\left(\varrho_\mathcal{F}\right)=\varrho_\mathcal{F}$, as discussed in \cite{Vollbrecht062307, Horodecki4206, Lee062304}. By applying the $\mathcal{T}_{iso}$ operator to the pure state $\ket{\psi}=\sum_{k=1}^r\sqrt{\lambda_k}U_A\otimes U_B\ket{kk}$ as given in  (\ref{ac1}), where $\ket{a_k}=U_A\ket{k}$ and $\ket{b_k}=U_B\ket{k}$, we obtain the following expression
\begin{equation}\label{p4iso1}
\mathcal{T}_{iso}\left(|\psi\rangle\langle\psi|\right)
=\varrho_{\mathcal{F}\left(|\psi\rangle\langle\psi|\right)}=\varrho_{\mathcal{F}\left(\vec{\lambda},V\right)},
\end{equation}
where $V=U_A^{T}U_B$, and the function $\mathcal{F}(\vec{\lambda}, V)$ is given by
\begin{equation}\label{p4iso2}
\mathcal{F}\left(\vec{\lambda},V\right)=|\braket{\Psi|\psi}|^2=\frac{1}{d}\Big|\sum_{k=1}^r\sqrt{\lambda_k}V_{kk}\Big|^2,
\end{equation}
with $V_{kj}=\braket{k|V|j}$ and $\vec{\lambda}$ being the Schmidt vector from  (\ref{ac1}). The function $\zeta$, as defined in  (\ref{p30}), becomes
\begin{equation}\label{p4iso3}
\zeta\left(\varrho_\mathcal{F}\right)=\min_{\set{\vec{\lambda},V}}\left\{\mathcal{C}^{t}_q\left(\vec{\lambda}\right):\frac{1}{d}\Big|\sum_{k=1}^r\sqrt{\lambda_k}V_{kk}\Big|^2=\mathcal{F}\right\}.
\end{equation}
It has been demonstrated that the minimum value is achieved when $V=I$ \cite{Lee062304}. Consequently, we obtain
\begin{align}\label{p4iso8}
\zeta\left(\varrho_\mathcal{F}\right)=\min_{\vec{\lambda}}\left\{\mathcal{C}^{t}_q\left(\vec{\lambda}\right):\frac{1}{d}\Big|\sum_{k=1}^r\sqrt{\lambda_k}\Big|^2=\mathcal{F}\right\}.
\end{align}

The $\mathcal{C}^{t}_q$-concurrence for the pure state $|\psi\rangle=\sum^d_{k=1}\sqrt{\lambda_k}|a_kb_k\rangle$  is expressed in terms of the Schmidt coefficients $\lambda_k$,
\begin{eqnarray}
\mathcal{C}^{t}_q(|\psi\rangle)=d-\sum^d_{k=1}\lambda^q_k-\sum^d_{k=1}(1-\lambda_k)^q.
\label{eqnui}
\end{eqnarray}
The value of $\zeta(\mathcal{F},q,d)$ for $\mathcal{F} \in (0, \frac{1}{d}]$ can be straightforwardly computed by setting $\lambda_1 = 1$ and $v_{11} = \sqrt{\mathcal{F}}$, which gives $\zeta(\mathcal{F},q,d) = 0$. For $\mathcal{F} \in (\frac{1}{d}, 1]$, similar to \cite{Yang052423}, the optimization problem in (\ref{eqnui}) can be minimized using Lagrange multipliers \cite{Rungta012307}. The optimization is carried out subject to the following constraints
\begin{eqnarray}
&& \sum_{k} \lambda_k = 1, \\
&& \sum_{k} \sqrt{\lambda_k} = \sqrt{\mathcal{F}d},
\end{eqnarray}
with the condition that $\mathcal{F}d \geq 1$. The condition for an extremum is then given by
\begin{eqnarray}
(\sqrt{\lambda_k})^{2q-1}+\eta_1\sqrt{\lambda_k}+\eta_2=0,
\end{eqnarray}
here $\eta_1$ and $\eta_2$ are the Lagrange multipliers. It is clear that the function $f(\sqrt{\lambda_k}) = (\sqrt{\lambda_k})^{2q-1}$ is convex with respect to $\sqrt{\lambda_k}$ when $q \geq 2$.

Since a convex function and a linear function intersect at most twice, this equation has at most two nonzero solutions for $\sqrt{\lambda_k}$, which we denote as $\chi$ and $\sigma$. The Schmidt vector $\vec{\lambda} = \{\lambda_1, \lambda_2, \dots, \lambda_d\}$ then takes the following form,
\begin{eqnarray}
\lambda_l=
\left\{
\begin{aligned}
  &\chi^2, \,\,\,\,  l=1,\cdots,n &
  \\
  &\sigma^2, \,\,\,\,  l=n+1,\cdots,n+m &
  \\
  &0,        \,\,\,\, l=n+m+1,\cdots,d&
\end{aligned}
\right.
\end{eqnarray}
here $n+m\leq d$ and $n\geq1$.
The minimization problem simplifies to the following form
\begin{eqnarray}
&\min & \mathcal{C}^{t}_q(|\psi\rangle)
\label{eqminimum}
\\
&s.t.& n\chi^2+m\sigma^2=1,
\nonumber\\
& &n\chi+m\sigma=\sqrt{\mathcal{F}d},
\label{eqconstraint1}
\end{eqnarray}
where $\mathcal{C}^{t}_q(|\psi\rangle)=m+n-n(\chi^{2q}+(1-\chi^{2})^q)-m(\sigma^{2q}+(1-\sigma^{2})^q)$.

Solving (\ref{eqconstraint1}), we obtain the following solutions of $\chi$ and $\sigma$,
\begin{eqnarray}
\chi^{\pm }_{nm}(\mathcal{F})=\frac{n\sqrt{\mathcal{F}d}\pm\sqrt{nm(n+m-\mathcal{F}d)}}{n(n+m)}
\label{eqconstraint02}
\end{eqnarray}
 and
\begin{eqnarray}
\sigma^{\pm }_{nm}(\mathcal{F})
&=&\frac{m\sqrt{\mathcal{F}d}\mp\sqrt{nm(n+m-\mathcal{F}d)}}{m(n+m)}.
\label{eqconstraint2}
\end{eqnarray}
Given that $\chi^{-}_{mn} = \sigma^{+}_{nm}$, it is sufficient to focus on the solution $\chi_{nm} := \chi^{+}_{nm}$. For $\chi_{nm}$ to represent a valid solution, the expression under the square root must remain non-negative. This leads to the constraint $\mathcal{F}d \leq n + m$.


On the other hand, the condition in (\ref{eqconstraint2}) requires that $\sigma_{nm}$ be non-negative, which leads to the constraint $\mathcal{F}d \geq n$. Under this condition, it can be verified that $\sigma_{nm}(\mathcal{F}) \leq \frac{\sqrt{\mathcal{F}d}}{n+m} \leq \chi_{nm}(\mathcal{F})$.
It is worth emphasizing that the scenario where $n = 0$ is not defined, which necessitates $n \geq 1$.
$\min_{n, m} \mathcal{C}^{t}_q(|\psi\rangle)$

To find the $\min_{n, m} \mathcal{C}^{t}_q(|\psi\rangle)$, we approach the minimization problem by treating $n$ and $m$ as continuous variables. Specifically, we minimize $\mathcal{C}^{t}_q(|\psi\rangle)$ over the parallelogram defined by the constraints $1 \leq n \leq \mathcal{F}d$ and $\mathcal{F}d \leq n + m \leq d$. It is important to note that the parallelogram reduces to a line when $\mathcal{F}d = 1$, corresponding to the separability boundary. Within this region, we have the relations $\chi_{nm} \geq \sigma_{nm} \geq 0$. Additionally, $\chi_{nm} = \sigma_{nm}$ holds if and only if $n + m = \mathcal{F}d$, and $\sigma_{nm} = 0$ when $n = \mathcal{F}d$. By analyzing the derivatives of $\chi_{nm}$ and $\sigma_{nm}$ with respect to the variables $n$ and $m$,
\begin{eqnarray}
\nonumber&&\frac{\partial\chi}{\partial n}=\frac{1}{2n}\frac{2\chi\sigma-\chi^2}{\chi-\sigma},
\\
\nonumber &&\frac{\partial\sigma}{\partial n}=-\frac{1}{2m}\frac{\chi^2}{\chi-\sigma},
\\
\nonumber&&\frac{\partial\sigma}{\partial m}=-\frac{1}{2m}\frac{2\chi\sigma-\chi^2}{\chi-\sigma},
\\
&&\frac{\partial\chi}{\partial m}=\frac{1}{2n}\frac{\sigma^2}{\chi-\sigma},
\label{}
\end{eqnarray}
we compute the $\frac{\partial}{\partial n} \mathcal{C}^{t}_q(|\psi\rangle)$, and $\frac{\partial}{\partial m} \mathcal{C}^{t}_q(|\psi\rangle),$
\begin{eqnarray}
&\frac{\partial \mathcal{C}^{t}_q}{\partial n}=1-(\chi^{2q}+(1-\chi^{2})^{q})\\ \nonumber
&-\frac{q\chi(2\chi\sigma-\chi^{2})
(\chi^{2q-2}-(1-\chi^2)^{q-1})}{\chi-\sigma}\\ \nonumber
&+\frac{q\chi^2\sigma(\sigma^{2q-2}-(1-\sigma^2)^{q-1})}{\chi-\sigma}
\label{eqderivative1}
\end{eqnarray}
and
\begin{eqnarray}
\frac{\partial \mathcal{C}^{t}_q}{\partial m}\nonumber
&=&1-(\sigma^{2q}+(1-\sigma^{2})^{q})+q\sigma^2(\sigma^{2q-2}+(1-\sigma^{2})^{q-1}) \\ \nonumber
&-&\frac{q\chi\sigma^2(\chi^{2q-2}-\sigma^{2q-2}+(1-\sigma^2)^{q-1}-(1-\chi^2)^{q-1})}{\chi-\sigma}\\ \nonumber
&=&1-((q-1)\sigma^{2}+1)(1-\sigma^{2})^{q-1}+(q-1)\sigma^{2q}\\ \nonumber
&-&\frac{q\chi\sigma^2(\chi^{2q-2}-\sigma^{2q-2}+(1-\sigma^2)^{q-1}-(1-\chi^2)^{q-1})}{\chi-\sigma}\\
&\leq&\sigma^{4}+(q-1)\sigma^{2q}-2q\sigma^2\chi(\chi+\sigma)
\label{eqderivative21}\\
&\leq&\sigma^{4}+(q-1)\sigma^{4}-4q\sigma^{4}
\label{eqderivative22}\\
&\leq&(1+q-1-4q)\sigma^{4}
\label{eqderivative23}\\
&\leq& 0,
\label{eqderivative2}
\end{eqnarray}
where the inequality in (\ref{eqderivative21}) is validated due to the fact that the function $f(q)=1-((q-1)\sigma^{2}+1)(1-\sigma^{2})^{q-1}$ is a decreasing function of $q$. Specifically, it decreases as $q$ increases, which ensures that the inequality holds,
\begin{widetext}
\begin{eqnarray}
 \frac{\partial f}{\partial q}=-[\sigma^{2}(1-\sigma^{2})^{q-1}+(q-1)\sigma^{2}+1)(q-1)(1-\sigma^{2})^{q-2}]\leq0
 \label{eqnderivativef}
\end{eqnarray}
\end{widetext}
for $q\geq2$,
and  the  $g(q)=\frac{q\chi\sigma^2(\chi^{2q-2}-\sigma^{2q-2}
+(1-\sigma^2)^{q-1}-(1-\chi^2)^{q-1})}{\chi-\sigma}$ is an increasing function of $q$, to confirm this, we compute the partial derivative of $g(q)$ with respect to $q$, which yields,
\begin{widetext}
\begin{eqnarray}
 \frac{\partial g}{\partial q}=\frac{(2q-2)(\chi^{2q-3}-\sigma^{2q-3})+(q-1)
 ((1-\sigma^2)^{q-2}-(1-\chi^2)^{q-2}))}{\chi-\sigma}\geq0,
 \label{eqnderivativef}
\end{eqnarray}
\end{widetext}
this derivative is non-negative for $q \geq 2$ and $\chi \geq \sigma$, confirming that $g(q)$ is indeed an increasing  of $q$. And (\ref{eqderivative22}) is valid when $\chi\geq\sigma$, and the function $\nu(\sigma)=\sigma^{2q}$ decreases as $q$ increases for $q\geq2$.

Let $u=m-n$  and $v=m+n$, here $u$ and $v$ represent the motions parallel and perpendicular to the boundaries of the parallelogram defined by $m + n = c$ (with $c$ being a constant). The $\frac{\partial \mathcal{C}^{t}_q}{\partial u}$ is then expressed as
\begin{widetext}
\begin{align}
\frac{\partial \mathcal{C}^{t}_q}{\partial u}\nonumber
=&\frac{1}{2}\left(\frac{\partial \mathcal{C}^{t}_q}{\partial m}-\frac{\partial \mathcal{C}^{t}_q}{\partial n}\right)\\ \nonumber
=&\frac{(q-1)}{2}(\sigma^{2q}-\chi^{2q})+\frac{1}{2}\left((1-\chi^{2}+q\chi^{2})(1-\chi^{2})^{q-1}-(1-\sigma^{2}+q\sigma^{2})(1-\sigma^{2})^{q-1}\right)\\ \nonumber
&-\frac{q(\sigma^{2}\chi-\chi^{2}\sigma)(\chi^{2q-2}-\sigma^{2q-2}+(1-\sigma^2)^{q-1}-(1-\chi^2)^{q-1})}
{2(\chi-\sigma)}\\
\leq&\frac{(q-1)}{2}(\sigma^{4}-\chi^{4})+q\chi\sigma(\chi^2-\sigma^2)\label{eqderivative30}\\
\leq&\frac{1}{2}(\sigma^{4}-\chi^{4})+2\chi\sigma(\chi^2-\sigma^2)\label{eqderivative32}\\
\leq&-(\chi^{2}-\sigma^{2})\frac{(\chi^2+\sigma^2+4\chi\sigma)}{2}\label{eqderivative33}\leq0,
\end{align}
\end{widetext}
here the  inequality in (\ref{eqderivative30}) is verified due to the fact that $w=(\sigma^{2q}-\chi^{2q})$  is a decreasing  for $q\geq2$, meaning that
\begin{eqnarray}
\frac{\partial w}{\partial q}=2q(\sigma^{2q-1}-\chi^{2q-1})\leq0
\end{eqnarray}
and  $y=(1-\sigma^{2}+q\sigma^{2})(1-\sigma^{2})^{q-1}$ is an decreasing function of $\sigma$, i.e.,
\begin{eqnarray}
 \frac{\partial y}{\partial \sigma}=-2q(q-1)\sigma^{3}(1-\sigma^{2})^{q-2}\leq0
\end{eqnarray}
for $q\geq2$. consequently,
\small{
$$\frac{(1-\chi^{2}+q\chi^{2})(1-\chi^{2})^{q-1}
-(1-\sigma^{2}+q\sigma^{2})(1-\sigma^{2})^{q-1}}{2}<0,$$ for $\chi\geq\sigma$.}

Denote $s=\frac{(q-1)}{2}(\sigma^{4}-\chi^{4}) +q\chi\sigma(\chi^2-\sigma^2)$. We can get
\begin{eqnarray}
\frac{\partial s}{\partial q}=\frac{-(\chi^{2}-\sigma^{2})(\chi^2+\sigma^2+2\chi\sigma)}{2}\leq0.
\end{eqnarray}
Consequently, the function $s$ exhibits a decreasing trend for $q\geq2$ when $\chi\geq\sigma$, ensuring that the inequality (\ref{eqderivative32}) is satisfied. Similarly, the validity of the inequality (\ref{eqderivative33}) is guaranteed under the condition $\chi\geq\sigma$.


By Eqs. (\ref{eqderivative2}) and (\ref{eqderivative33}), it is evident that within the parallelogram, the partial derivative $\frac{\partial \mathcal{C}^{t}_q}{\partial m}\leq0$, while  $\frac{\partial \mathcal{C}^{t}_q}{\partial u}\leq0$ everywhere except along the boundary where $m+n=\mathcal{F}d$, where it vanishes. These results suggest that the minimum of $\mathcal{C}^{t}_q(|\psi\rangle)$ occurs at the vertex where $n=1$ and $m=d-1$.

Consequently, the minimum value of $\mathcal{C}^{t}_q(|\psi\rangle)$ is given by
\begin{eqnarray}
\mathcal{C}^{t}_q(|\psi\rangle)&=&d-(\chi^{2q}_{1,d-1}+(1-\chi^{2}_{1,d-1})^q)\\ \nonumber
&-&(d-1)(\sigma^{2q}_{1,d-1}+(1-\sigma^{2}_{1,d-1})^q).
\end{eqnarray}
Following this approach, we obtain an explicit analytical expression that
\begin{eqnarray}
\zeta(\mathcal{F},q,d)&=&d-(\chi^{2q}+(1-\chi^{2})^{q}))\\ \nonumber
&-&\left(d-1\right(\sigma^{2q}+(1-\sigma^{2})^{q}),
\label{eqnfqd}
\end{eqnarray}
where the parameters $\chi$ and $\sigma$ are determined as follows
\begin{eqnarray}
\nonumber&&\chi=\frac{1}{\sqrt{d}}(\sqrt{\mathcal{F}}+\sqrt{(d-1)(1-\mathcal{F})}), \\&&\sigma=\frac{1}{\sqrt{d}}(\sqrt{\mathcal{F}}-\frac{\sqrt{1-\mathcal{F}}}{\sqrt{d-1}}).
\label{}
\end{eqnarray}
Consequently, the $\mathcal{C}^{t}_q$-concurrence for isotropic states is expressed as
$\mathcal{C}^{t}_q(\varrho_F)=co(\zeta(\mathcal{F},q,d))$, where $\zeta(\mathcal{F},q,d)$ is defined in  (\ref{eqnfqd}).

\section{ Analytical expression of $\zeta\left(\varrho_W\right)$  for Werner states}\label{c}


Define the $\left(U\otimes U\right)$-twirling transformation as
$\mathcal{T}_{wer}\left(\varrho\right)=\int dU\left(U\otimes U\right)\varrho\left(U^{\dagger}\otimes U^{\dagger}\right)$
which maps any density matrix $\varrho$ to a Werner state $\varrho_\textit{w}$ characterized by
$\mathcal{T}_{wer}\left(\varrho\right)=\varrho_{\textit{w}\left(\varrho\right)}$
where the Werner parameter is given by
$\textit{w}\left(\varrho\right)=\mathrm{tr}\left(\varrho\sum_{l<k}|\Phi_{lk}^-\rangle\langle\Phi_{lk}^-|\right)$.
Moreover, the $\varrho_\textit{w}$ remains invariant under this transformation, i.e.,
$\mathcal{T}_{wer}\left(\varrho_\textit{w}\right)=\varrho_\textit{w}$ \cite{Horodecki4206,Lee062304}.
Applying the twirling operation $\mathcal{T}_{wer}$ to the pure state $\ket{\psi}$ in (\ref{ac1}), where
$\ket{\psi}=\sum_{l=1}^r\sqrt{\lambda_l}U_A\otimes U_B\ket{ll}$,
we obtain
\begin{equation}\label{ap3w1}
\mathcal{T}_{wer}\left(|\psi\rangle\langle\psi|\right)=\varrho_{\textit{w}\left(|\psi\rangle\langle\psi|\right)}=\varrho_{\textit{w}\left(\vec{\lambda},\Lambda\right)},
\end{equation}
where $\Lambda=U_A^{\dagger}U_B$  and the Werner parameter $\textit{w}(\vec{\lambda}, \Lambda)$ is given by
\begin{align}\label{ap3w2}
\textit{w}\left(\vec{\lambda},\Lambda\right)&=\sum_{l<k}|\braket{\Psi_{lk}^-|\psi}|^2\nonumber\\
&=\frac{1}{2}\sum_{l<k}|\sqrt{\lambda_l}\Lambda_{kl}-\sqrt{\lambda_k}\Lambda_{lk}|^2,
\end{align}
where $\Lambda_{lk}=\braket{l|\Lambda|k}$. Then the function $\zeta$ defined in (\ref{p30})  can be expressed as
\begin{equation}\label{ap3w3}
\zeta\left(\varrho_\textit{w}\right)=\min_{\set{\vec{\lambda},\Lambda}}\left\{\mathcal{C}^{t}_q\left(\vec{\lambda}\right):\textit{w}\left(\vec{\lambda},\Lambda\right)=\textit{w}\right\}.
\end{equation}
By using $\textit{w}\left(\vec{\lambda},\Lambda\right)=\textit{w}$, we obtain
\begin{align}\label{ap3w4}
2\textit{w}&=1-\sum_{l=1}^r\lambda_l|\Lambda_{ll}|^2-2\sum_{l<k}\sqrt{\lambda_l\lambda_k}\mathrm{Re}\left(\Lambda_{lk}\Lambda_{kl}^{\ast}\right)\nonumber\\
&\leq1+2\sum_{l<k}\sqrt{\lambda_l\lambda_k}|\mathrm{Re}\left(\Lambda_{lk}\Lambda_{kl}^{\ast}\right)|\nonumber\\
&\leq1+2\sum_{l<k}\sqrt{\lambda_l\lambda_k}\nonumber\\
&=|\sum_{l=1}^r\sqrt{\lambda_l}|^2,
\end{align}
where $\mathrm{Re}\left(z\right)$ denotes the real part of the complex number $z$.


It is important to note that the equalities in (\ref{ap3w4}) are satisfied only when the two nonzero components $\Lambda_{01}=1$ and $\Lambda_{10}=-1$, and $\vec{\lambda}=\left(\lambda_1,\lambda_2,0,\cdots,0\right)$, which together yield the optimal minimum of (\ref{ap3w3}) \cite{Vollbrecht062307}. Thus,  (\ref{ap3w3}) simplifies to
\begin{equation}\label{ap3w5}
\zeta\left(\varrho_\textit{w}\right)=\min_{\vec{\lambda}}\left\{\mathcal{C}^{t}_q\left(\vec{\lambda}\right):|\sum_{l=1}^2\sqrt{\lambda_l}|^2=2\textit{w}\right\}.
\end{equation}
For values of $\textit{w}\in(0,\frac{1}{2}]$, it is always possible to choose appropriate unitary transformations $V_A$ and $V_B$, such that $\lambda_1=1$, leading to $\zeta\left(\varrho_\textit{w}\right)=0$. When $\textit{w}>\frac{1}{2}$, the minimization of equation (\ref{ap3w5}) must be carried out subject to the following constraints
\begin{align}
\sum_{l=1}^2\lambda_l=1,~~~\sum_{l=1}^2\sqrt{\lambda_l}=\sqrt{2\textit{w}}.
\end{align}
The remaining calculation follows a similar approach to that in Appendix \ref{b}. By setting $d=2$ and $\mathcal{F}=\textit{w}$, we obtain the result
\begin{equation}\label{ap3w6}
\zeta\left(\varrho_\textit{w}\right)=2\left(1-\left(\frac{1+G}{2}\right)^{q}-\left(\frac{1-G}{2}\right)^{q}\right),
\end{equation}
where $G=2\sqrt{\textit{w}(1-\textit{w})}$.

\end{appendix}

\begin{acknowledgements} This work was supported by the National Natural Science Foundation of China (NSFC) under Grant 12171044, and the specific research fund of the Innovation Platform for Academicians of Hainan Province. The authors grateful to Zhi-Wei Wei and Zhi-Xiang Jin for very helpful discussions.
\end{acknowledgements}

\bigskip

\end{document}